\documentclass[a4paper,twocolumn,11pt]{quantumarticle}
\pdfoutput=1
\usepackage[utf8]{inputenc}
\usepackage[english]{babel}
\usepackage[T1]{fontenc}
\usepackage{amsmath}
\usepackage{hyperref}

\errorcontextlines=100

\usepackage{tikz}
\usepackage{lipsum}
\usepackage{adjustbox}
\usepackage[numbers,sort&compress]{natbib}
\usepackage{amssymb}

\usepackage{braket}
\usepackage{natbib}
\usepackage{color}
\usepackage{url}
\usepackage{graphicx}
\usepackage{multirow}
\usepackage{amsthm}
\usepackage[mode=buildnew]{standalone}
\usepackage{algorithm2e}
\newtheorem{theorem}{Theorem}
\newtheorem{corollary}{Corollary}[theorem]
\newtheorem{lemma}{Lemma}
\newtheorem{definition}{Definition}

\newtheorem{proposition}{Proposition}



\newcommand{\class}[1]{\textup{\textbf{#1}}}

\newcommand{\Poly}{\class{P}}
\newcommand{\NP}{\class{NP}}
\newcommand{\coNP}{\class{coNP}}
\newcommand{\Pitwo}{\mathbf{\Pi_2^p}}
\newcommand{\Sigmatwo}{\mathbf{\Sigma_2^p}}
\newcommand{\diag}{\text{diag}}
\newcommand{\problem}[1]{\textup{\textsf{#1}}}

\newcommand{\GlobalStoq}{\problem{GlobalStoq}}
\newcommand{\forallexistSAT}{\mathsf{\forall \exists}\text{-}$3$ \problem{-SAT}}
\newcommand{\minmaxSAT}{\problem{MINMAX}-$2$-\problem{SAT}}
\newcommand{\poly}{\text{poly}}

\newcommand{\term}{{\sf TermStoq}}
\newcommand{\glob}{{\sf GlobalStoq}}
\newcommand{\lh}{{\sf LH}}

\newcommand{\norm}[1]{||#1||}

\begin{document}

\title{Termwise versus globally stoquastic local Hamiltonians: questions of complexity and sign-curing}

\author{Marios Ioannou}
\affiliation{Dahlem Center for Complex Quantum Systems, Freie Universität Berlin, 14195 Berlin, Germany}
\email{marios.ioannou@fu-berlin.de}
\author{Stephen Piddock}
\affiliation{School of Mathematics, University of Bristol, Bristol, UK}
\affiliation{Heilbronn Institute for Mathematical Research, Bristol, UK}

\author{Milad Marvian}
\affiliation{Department of Electrical and Computer Engineering and Center for Quantum Information and Control (CQuIC), University of New Mexico, Albuquerque, NM 87131, United States of America }

\author{Joel Klassen}
\affiliation{Phasecraft Ltd.}

\author{Barbara M. Terhal}
\affiliation{QuTech, Delft University of Technology, P.O. Box 5046, 2600 GA Delft, The Netherlands}
\affiliation{
JARA Institute for Quantum Information, Forschungszentrum Juelich, D-52425 Juelich, Germany}

\maketitle


\begin{abstract}
	We elucidate the distinction between global and termwise stoquasticity for local Hamiltonians and prove several complexity results. We show that the stoquastic local Hamiltonian problem is $\textbf{StoqMA}$-complete even for globally stoquastic Hamiltonians. We study the complexity of deciding whether a local Hamiltonian is globally stoquastic or not. In particular, we prove \coNP{}-hardness of deciding global stoquasticity in a fixed basis and $\Sigmatwo$-hardness of deciding global stoquasticity under single-qubit transformations. As a last result, we expand the class of sign-curing transformations by showing how Clifford transformations can sign-cure a class of disordered 1D XYZ Hamiltonians.
\end{abstract}

\maketitle


\section{ Introduction}

In classical computational physics the simulation of quantum systems often uses numerical methods based on Monte Carlo algorithms. The goal of such algorithms is to sample from a probability distribution of interest, for example the Gibbs distribution associated with a Hamiltonian. One of the fundamental problems that can arise in such simulations goes under the name of the "sign problem" and is due to intrinsic differences in complexity between classical and quantum systems.

Loosely speaking, a Hamiltonian is considered "sign problem free" when it satisfies criteria which render it amenable to Monte Carlo simulations. 
It should be immediately noted however, that being sign problem free is not a sufficient criterion 
for efficient classical simulability. This can be seen from the fact that there exist physical systems (including such that correspond to purely classical Hamiltonians) which encode \NP{}-hard problems. 
 
 Understanding a minimal set of conditions which guarantees a Hamiltonian to be sign problem free or realising ways in which the sign problem can be "cured" for certain Hamiltonians or even simply identifying systems which have an inherent sign problem are questions that are of high relevance and which have been studied extensively in computational physics and quantum chemistry over many years, see e.g. \cite{Huffman,Chandrasekharan, Aarts, Wessel}. 

More recently, the emergence of quantum computing has led to the formal study of the complexity of physically motivated computational problems, such as the problem of estimating the ground state energy of a Hamiltonian. To capture the class of Hamiltonians which do not suffer from the sign problem and study their specific computational power, the notion of so-called \emph{stoquastic} Hamiltonians was introduced in \cite{BDT:stoq}. 
In particular, a Hamiltonian is called stoquastic with respect to the computational basis $\{\ket{x}\}$ when $H$ is real and additionally $\bra{x} H \ket{y} \leq 0, \forall x\neq y$. Stoquasticity of $H$ ensures that the thermal state $e^{-\beta H}$ is a non-negative matrix \cite{AG:NP} in the computational basis for any $\beta \geq 0$, and that the ground state has non-negative amplitudes. In the setting of path integral quantum Monte Carlo, having no sign problem translates to the partition function $Z(\beta)={\rm Tr} (e^{-\beta H})$ being expressible as a sum over non-negative weights. Therefore every stoquastic Hamiltonian is also sign problem free, however, it is worth mentioning that stoquasticity is not a necessary condition for avoidimg the sign problem \cite{Suzuki, GAH:QMC,GH:sign}.
For a detailed exposition of the connection between stoquastic Hamiltonians and the sign problem in the specific setting of the path integral quantum Monte Carlo method we refer to Appendix \ref{sec:signProblem}. 

Research on the power of stoquastic Hamiltonians has gone into a number of different directions. Of particular interest is the study of stoquastic adiabatic computation, with quantum annealing being a prime example. While it has been shown that adiabatic computing using stoquastic frustration-free Hamiltonians can be efficiently classically simulated \cite{BravyiTerhalMA}, a recent breakthrough result shows that stoquastic adiabatic computation can be subexponentially more powerful than classical computation with respect to the number of queries to an oracle \cite{hastings:stoq-adia, GV:nosign}. On the other hand, as it is strongly believed that general Hamiltonians are computationally more powerful than stoquastic Hamiltonians, there has been an increased interest in engineering non-stoquastic Hamiltonians for adiabatic computing \cite{Ozfidan2019}. Interestingly, recent evidence \cite{CAHY:design} suggests that the run-time for quantum adiabatic optimization algorithms might not benefit from using non-stoquastic over stoquastic Hamiltonians.

As mentioned above, in \cite{BDT:stoq} a Hamiltonian is called stoquastic when its matrix representation consists of real entries and the off diagonal elements are non-positive. In this work we will refer to such Hamiltonians as being \emph{globally} stoquastic. However, in much of the literature concerning stoquasticity, e.g. \cite{AG:NP}, \cite{BBT:stoq}, \cite{BravyiTerhalMA}, the stronger notion of \emph{termwise} stoquastic local Hamiltonians has been used. A local Hamiltonian is said to be termwise stoquastic with respect to the computational basis if there exists a decomposition into local terms each of which itself is stoquastic, see Definitions \ref{def:globalStoq} and \ref{def:termStoq}. 
In fact these two classes, termwise and globally stoquastic, do not strictly coincide, as we shall illustrate. Given that globally stoquastic Hamiltonians are a strictly larger class, it is natural to ask which complexity results established for termwise stoquastic Hamiltonians may be extended to globally stoquastic Hamiltonians.
In this work we study this in the context of the local Hamiltonian problem showing that the complexity for global and termwise stoquastic Hamiltonians coincides.

In the stoquastic local Hamiltonian problem one is promised that the Hamiltonians are stoquastic in the computational basis. However, in practice given an unknown Hamiltonian one would ideally like to check whether this is indeed the case. This is a point that has been largely taken for granted, and something we address in this work. For the simplest case of deciding if a $k$-local Hamiltonian is $k$-termwise stoquastic in a fixed basis, an efficient classical algorithm has been proposed~\cite{Marvian2018}. However, we show that for globally stoquastic Hamiltonians this is generically a computationally-hard problem. 

Furthermore, as stoquasticity is a basis dependent notion, one might hope to find a suitable basis in which to express the Hamiltonian such that it becomes sign problem free if it isn't in the computational basis. In Refs.~\cite{Marvian2018, KT:stoq} and \cite{Klassen2019} the authors examined whether a local Hamiltonian can be computationally-\emph{termwise}-stoquastic, that is, whether the sign problem can be cured by a computationally-efficient basis changing transformation. Examples of such transformations are single-qubit (product) unitary gates or single-qubit Clifford transformations. In a similar vein, one can ask whether the sign problem can approximately be cured \cite{Hangleiter2019} or, whether a low-depth quantum circuit can transform away the sign problem in the ground state \cite{torlai}.
In these previous works it was found that deciding whether a local Hamiltonian is termwise stoquastic or not, is computationally-hard \cite{Marvian2018,Klassen2019}. In particular, in Ref.~\cite{Klassen2019} it was shown that deciding if a two-local Hamiltonian is termwise stoquastic under single-qubit unitary transformations, is \NP-hard, while Ref.~\cite{Klassen2019} also gave an efficient algorithm for deciding termwise stoquasticity under single-qubit unitary transformations for two-local Hamiltonians {\em without} any 1-local terms. Naturally, these hardness results do not preclude the existence of heuristic or approximate strategies which find basis changes which can reduce the severity of the sign problem as explored in \cite{Hangleiter2019}, but this has so far not been explored systematically.

As mentioned, in these previous works the problem of sign-curing via local transformations was studied with the notion of termwise stoquasticity in mind. However, it is not clear if the results would be identical if we study the problem in the setting of global stoquasticity. Our paper also answers this question showing that the complexity for deciding global stoquasticity is fundamentally different.

We start this paper by clarifying the distinction between the notion of global stoquasticity versus termwise stoquasticity for a local Hamiltonian (Section \ref{global_vs_termwise}). Although the two definitions coincide for (1) 2-local Hamiltonians and (2) Hamiltonians where each qubit interacts with $O(1)$ others, there are 3-local Hamiltonians which are globally stoquastic but not termwise stoquastic.

In Section \ref{sec:globLH} we consider the complexity of the local Hamiltonian problem, which is known to be \textbf{StoqMA}-complete for termwise stoquastic Hamiltonians \cite{BBT:stoq}. We show that the local Hamiltonian problem is contained in \textbf{StoqMA} for globally stoquastic Hamiltonians as well (Theorem~\ref{thm:globalLH}) and therefore that the complexity is independent of the notion of stoquasticity used. Another problem that has been studied extensively is the so called quantum $k$-SAT problem. The stoquastic version of this problem amounts to deciding whether a $O(1)$-termwise stoquastic Hamiltonian is frustration-free. It was shown that stoquastic $k$-SAT is contained in \textbf{MA} \cite{BravyiTerhalMA}, and, assuming a constant spectral gap, in \NP{} \cite{AG:NP}. The notion of globally stoquastic Hamiltonians suggests a slight generalization of stoquastic $k$-SAT which we also believe to be contained in \textbf{MA}.

In Sections~\ref{sec:fixedbasis}, \ref{sec:stoq-lb} and \ref{sec:Clifford} we switch our attention to the complexity of deciding whether a Hamiltonian is termwise or globally stoquastic and whether the Hamiltonian can made stoquastic by local basis changes or Clifford transformations.
We first show that while it is computationally tractable to decide whether an $n$-qubit Hamiltonian is termwise stoquastic \cite{Marvian2018}, it is \coNP{}-complete to decide whether it is globally stoquastic (Theorem \ref{thrm:Globalstoqu_conp}) even in a fixed basis. A definition of stoquasticity which cannot be easily checked may seem impractical, however we also show that for most if not all physically relevant Hamiltonians the general hardness obstruction does not apply. 
Next, we consider the problem of sign curing under transformations of the form $\bigotimes_{i=1}^n U_i$ where $U_i \in  {\sf U(2)}$ to show that deciding global stoquasticity under such local unitary transformations is $\Sigmatwo$-hard (Section \ref{sec:stoq-lb}). These results fit into the existing picture as shown in Table~\ref{hardnessResults}. 
As we mentioned earlier, if the Hamiltonian is 2-local or every qubit interacts with only $O(1)$ other qubits, then global stoquasticity is equivalent to termwise stoquasticity. The Hamiltonians in our hardness constructions minimally violate these conditions: they are 3-local and there is a single qubit which can interact with all other qubits.

Taking these results together one arrives at the curious observation that, since \[\textbf{StoqMA} \subseteq \textbf{AM} \subseteq \Pitwo=\text{co}\Sigmatwo,\]
 the complexity of determining if a local Hamiltonian is not globally stoquastic modulo a local basis change, is strictly harder than approximating its ground state energy, unless all of the above complexity classes are equal. While single qubit product unitaries form an important class of computationally efficient basis transformations, we would ideally like to extend our results to other classes of transformations such as shallow depth circuits or Clifford circuits. It is intuitively clear that different basis transformations will sign-cure different classes of Hamiltonians. 

In our last result, we give a family of disordered 1D XYZ models which can be sign-cured by Clifford transformations while it is not possible to sign-cure using single qubit product unitary transformations (Theorem \ref{clifford2}). This makes the above intuition rigorous and extends the class of Hamiltonians which are, modulo a computationally efficient basis transformation, stoquastic.

  For the technical sections of this paper familiarity with the complexity classes \NP{}, \Poly{} and \coNP{} will be assumed. We will also denote a general polynomial in $n$ as ${\rm poly}(n)$ when we don't care about its degree or prefactors specifically.

\begin{table*}[htb]
\centering
	\begin{tabular}{|c|c|c|}
		\hline
		Stoquastic & $O(1)$-Termwise & Global  \\
		\hline
		Fixed Basis & $ {
			\bf P}$ \cite{Marvian2018} &\coNP{}-complete [*]\\
		1-Local Unitary & \NP{}-hard \cite{Klassen2019}  & $\Sigmatwo$-hard [*] \\
		\hline
	\end{tabular}
	\caption{Hardness results for deciding membership of a $O(1)$-local Hamiltonian to a particular stoquasticity class. [*] indicates proof given in this work.
	The complexities on the bottom row of the table are \NP-complete and $\Sigmatwo$-complete respectively when there is the additional promise that in the YES case the curing unitary $U$ can be described in such a way that allows for efficient computation of the matrix entries of $UHU^{\dagger}$, see Theorem \ref{thm:Sigmatwo}.}
	\label{hardnessResults}
\end{table*}

\section{Global versus termwise stoquasticity}\label{global_vs_termwise}

There are currently two definitions of stoquasticity which are used interchangeably in the literature. We will here make the distinction between them and study their differences. We will be concerned with $n$-qubit $k$-local Hamiltonians $H$: such Hamiltonians can be written as a sum of Hermitian terms each of which acts on at most $k$ qubits nontrivially and $k$ is usually $O(1)$\footnote{For sparse Hamiltonians without locality structure, for example the Laplacian on a graph \cite{CAHY:design}, the notion of termwise stoquasticity is not meaningful.}. We start by formally defining what we referred to in the introduction as global stoquasticity,

\begin{definition}[Globally Stoquastic]\label{def:globalStoq}
	A $n$-qubit $k$-local Hamiltonian $H$ is globally stoquastic with respect to a basis $\{\ket{x}\}$, if $H$ is real and obeys $\forall x\neq y,\;\bra{x} H \ket{y} \leq 0$.
\end{definition}

One way to further restrict the class of stoquastic Hamiltonians is by asking whether the local Hamiltonian is decomposable as a sum of local terms, each of which itself is stoquastic. More precisely, one can define the class of local Hamiltonians which are termwise stoquastic:

\begin{definition}[Termwise Stoquastic]\label{def:termStoq}
	A $n$-qubit $k$-local Hamiltonian $H$ is $m$-termwise stoquastic with respect to a basis $\{\ket{x}\}$, if it admits a decomposition into $m$-local terms $H=\sum_{i=1}^I D_i$ such that each $m$-local Hermitian term $D_i$ is real and obeys $\forall x\neq y,\;\bra{x} D_i \ket{y} \leq 0$.
\end{definition}

{\em Remark}: For the set of real $k$-local Hamiltonians, we can choose a real Pauli basis with each basis element acting on at most $k$ qubits nontrivially. 
We can then view a $k$-local Hamiltonian with $k=O(1)$ as a point in $ \mathbb{R}^d$ with $d \leq \sum_{k'=0}^k {\binom{n}{k'}}(4^{k'}-1)={\rm poly}(n)$. 
On the other hand, the set of $m$-local stoquastic matrices $D$ generate a convex cone ${\cal C}_n(m)$. 
The question whether a $k$-local Hamiltonian $H$ is $m$-termwise stoquastic is thus the question whether $H$ lies in ${\cal C}_n(m)$. Caratheodory's theorem then tells us that if $H$ lies in the cone, it is supported on $I \leq d+1\leq {\rm poly}(n)$ points $D_i \in {\cal C}_n(m)$.\\

Note that the definition for global stoquasticity is equivalent to being $(m=n)$-termwise stoquastic and being $m$-termwise stoquastic implies being $(m+1)$-termwise stoquastic. One can further see that any $m$-termwise stoquastic Hamiltonian is also globally stoquastic. 
As was stated in \cite{BDT:stoq} and later shown explicitly in \cite{Klassen2019} the converse also holds for two-local Hamiltonians, i.e. every two-local globally stoquastic Hamiltonian is $2$-termwise stoquastic. Such equivalence can be generalized:

\begin{lemma}
	If each qubit of a $k$-local Hamiltonian $H$ interacts with at most $l=O(1)$ other qubits and $k=O(1)$, then global stoquasticity implies $m$-termwise stoquasticity, where $m=kl=O(1)$.
	\label{lem:bd}
\end{lemma}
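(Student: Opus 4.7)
The plan is to exhibit an $m$-termwise stoquastic decomposition of $H$ by grouping its Pauli components according to their ``flip pattern.'' Expand $H = \sum_P c_P P$ in the computational basis with real coefficients; $k$-locality of $H$ guarantees that each Pauli $P$ with $c_P \neq 0$ has support inside some hyperedge $e$ (the support of a local term of $H$), hence $|\mathrm{supp}(P)| \leq k$. Let the flip set $F(P)$ be the set of qubits on which $P$ has an $X$ or $Y$ factor, and collect terms by flip set to write
\[
H = D + \sum_{F \neq \emptyset} H_F, \qquad H_F = \sum_{P : F(P) = F} c_P P,
\]
with $D = H_\emptyset$ the diagonal part.

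For the stoquasticity of each block: when $F \neq \emptyset$, every Pauli in $H_F$ has zero diagonal, and the only pairs $(x,y)$ at which $H_F$ has a nonzero matrix entry are those with $x \oplus y = F$. No other $H_{F'}$ contributes to $\bra{x} H \ket{y}$ for such a pair, so $\bra{x} H_F \ket{y} = \bra{x} H \ket{y} \leq 0$ by global stoquasticity. Reality and Hermiticity are inherited from $H$, so each $H_F$ is stoquastic. The diagonal part $D$ is a sum of $k$-local diagonal Pauli strings, each trivially stoquastic.

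For the locality of each block: since every Pauli $P$ with $c_P \neq 0$ is supported in some hyperedge $e$ containing its flip set $F(P)$, the operator $H_F$ is supported inside
\[
H(F) := \bigcup \{e : F \subseteq e, \, e \text{ a hyperedge of } H\}.
\]
For any fixed $q \in F$, every hyperedge with $F \subseteq e$ also contains $q$ and so lies in the closed neighborhood $N[q]$; therefore $H(F) \subseteq N[q]$, and by the bounded-degree hypothesis $|H(F)| \leq l+1$. Combining the $(l+1)$-local blocks $H_F$ with the $k$-local diagonal Paulis making up $D$ yields the desired decomposition with $m = \max(k, l+1) \leq kl$; the last inequality follows from the fact that a qubit lying in a size-$k$ hyperedge already has at least $k-1$ neighbors, so $k \leq l+1$.

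The main substantive step is the locality bound $H(F) \subseteq N[q]$, which is exactly where the bounded-degree hypothesis gets used; stoquasticity of each piece is then automatic from the entrywise nature of the condition together with the fact that distinct flip sets never contribute to the same matrix entry. Caveats at the boundaries ($k=1$ or $l=0$, where $H$ is trivially a sum of stoquastic one-qubit terms) can be handled directly and do not affect the argument.
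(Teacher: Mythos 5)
Your proof is correct and follows essentially the same route as the paper's: both group the Pauli expansion of $H$ by flip set and observe that distinct flip sets contribute to disjoint off-diagonal matrix entries, so global stoquasticity passes to each block. The only difference is in the locality bookkeeping, where your observation that every hyperedge containing a fixed $q \in F$ lies in $N[q]$ actually gives the tighter bound $m \le \max(k, l+1)$, improving on the paper's $m \le kl$.
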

\begin{proof}
	Assume global stoquasticity of the Hamiltonian. Write $H=\sum_S M^{(S)}$ where $M$ are all terms in $H$ which flip the qubits in the subset $S$ with $|S|\leq k$. When $S=\emptyset$, $M^{(\emptyset)}$ collects all the diagonal terms. Global stoquasticity implies that each term $M^{(S)}$ is stoquastic as each such term gives rise to distinct non-zero matrix elements $\bra{x} H \ket{y}$. Terms in $M^{(S)}$ for $S \neq \emptyset$ involve qubits in the subset $S$ and at most $|S|\times l$ qubits outside of $S$ (where they act Z-like). Hence $M^{(S)}$ is at most $kl$-local and $H$ is $m$-termwise stoquastic with $m\leq kl$.
\end{proof}

The open question is thus whether the notion of global stoquasticity and $O(1)$-termwise stoquasticity is the same for arbitrary $k$-local interactions. One can in fact construct the follow 3-local counterexample, due to Bravyi \cite{Bravyi_private}.


\begin{proposition}\label{prop1} 
	There exists a class of $3$-local globally-stoquastic Hamiltonians which are not 3-termwise stoquastic.
\end{proposition}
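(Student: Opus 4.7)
The plan is to exhibit an explicit 3-local Hamiltonian $H$ that is globally stoquastic but admits no 3-termwise stoquastic decomposition. A natural ansatz is $H = X_0 \otimes A$ on $n+1$ qubits, with qubit $0$ playing the role of a control and $A$ a diagonal 2-local operator on qubits $1, \ldots, n$, so that $H$ is automatically 3-local. The only nonzero off-diagonal matrix elements of $H$ are $\bra{0,\vec{x}}H\ket{1,\vec{x}} = \bra{\vec{x}}A\ket{\vec{x}}$, and global stoquasticity reduces to the single requirement that every diagonal entry of $A$ be non-positive.

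The first step is to reduce 3-termwise stoquasticity of $H$ to a combinatorial decomposition problem for the diagonal of $A$. Given any candidate $H = \sum_\alpha D_\alpha$ into 3-local stoquastic terms, observe that off-diagonals of $H$ vanish in every flip sector other than the one that flips only qubit 0. Since every off-diagonal of a stoquastic $D_\alpha$ is non-positive, a sum of such terms equalling zero in those sectors forces each contribution to vanish identically. Hence every $D_\alpha$ has the form (diagonal) $+\, X_0 \otimes Q_\alpha$ with $Q_\alpha$ a diagonal operator on at most two of qubits $1,\ldots,n$, and stoquasticity becomes precisely the pointwise constraint $Q_\alpha \leq 0$. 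It follows that $H$ is 3-termwise stoquastic iff $f(\vec{x}) := \bra{\vec{x}}A\ket{\vec{x}}$ admits a decomposition $f = c + \sum_i g_i(x_i) + \sum_{i<j} g_{ij}(x_i,x_j)$ with $c \leq 0$ and each $g_i,g_{ij}$ pointwise non-positive.

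For the last step I would take the family $A_n = -\tfrac{n}{2}\, I - \sum_{1 \leq i < j \leq n} Z_i Z_j$ for $n \geq 3$. A short computation gives $f_n(\vec{x}) = -\tfrac{1}{2}(n-2|\vec{x}|)^2 \leq 0$, so $H_n := X_0 \otimes A_n$ is globally stoquastic. In any attempted decomposition as above, every 2-body piece $g_{ij}$ must carry the full $Z_i Z_j$-coefficient $-1$; writing the non-positivity constraints at the points $(0,1)$ and $(1,0)$ and summing them gives that the identity part of $g_{ij}$ is at most $-1$. Summing over the $\binom{n}{2}$ pairs, together with non-positive contributions from the constant and 1-body pieces, gives an identity total of at most $-\binom{n}{2}$, which for $n \geq 3$ is strictly smaller than the required value $-n/2$, a contradiction. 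I expect the main difficulty to be the structural reduction in paragraph two, namely justifying rigorously that any 3-termwise decomposition of $H$ must have each summand of the restricted ``$X_0 \otimes$ diagonal plus diagonal'' form; once this is established the infeasibility is an elementary calculation that extends uniformly to the claimed family.
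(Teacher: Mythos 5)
Your proof is correct and follows essentially the same route as the paper: the paper also takes $H = X_0 \otimes (E_0 - H_{\rm Ising})$ for a frustrated Ising model, reduces any 3-termwise decomposition to terms of the form $X_0 \otimes (\text{diagonal 2-local})$, and derives a contradiction by showing each pair term's identity coefficient must be at most $-1$ while their sum must equal $E_0 > -|E|$. Your explicit instance (the complete-graph antiferromagnet with shift $-n/2$) is a special case of the paper's class, and your identity-coefficient counting is the same infeasibility argument in slightly different clothing.
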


\begin{proof}
	Consider a graph $G=(V,E)$ with $n-1$ vertices and define an $n-1$  Ising Hamiltonian on it: 
	\begin{equation}
	H_{\text{Ising}}=\sum_{(i,j)\in E}J_{ij}Z_iZ_j,
	\end{equation}
	where $J_{ij}\in \{-1,+1\}$. Now consider the Hamiltonian $H=X_0 \otimes(E_0-H_{\text{Ising}})$ where $E_0=\min_{x\in \{0,1\}^{n-1}}[\bra{x}H_{\text{Ising}}\ket{x}]< 0$ is the lowest eigenvalue of $H_{\text{Ising}}$. Note that here as well as in later parts we have abused notation for the sake of readability, by writing $E_0$ instead of $(E_0) I$ i.e the identity matrix weighted by $E_0$.
	We show that $H$ is globally-stoquastic but cannot be written as a sum of stoquastic 3-local terms when the Ising Hamiltonian is frustrated (and therefore $E_0$ is larger than the sum of the minimal values of each term).  
	The only non-zero off-diagonal matrix elements of $H$ are $\bra{x_0,x} H \ket{y_0,x}=\bra{x} (E_0-H_{\rm Ising}) \ket{x}\leq 0$ where $x \in \{0,1\}^{n-1}$ and $x_0 \neq y_0$. Since $E_0$ is the smallest eigenvalue of $H_{\rm Ising}$, it follows immediately that $H$ is globally stoquastic.
	
	Now consider whether $H$ can be $m$-termwise stoquastic. The only way for $H$ to decompose into $m$-termwise stoquastic terms is for it to take the form $H=X_0 \otimes \sum_{i,j} D_{i,j}$ where each $X_0 \otimes D_{i,j}$ is stoquastic and $D_{i,j}$ acts on qubits at vertices $i$ and $j$ and is furthermore diagonal. It is straightforward to see that one does not need to consider more general decompositions of the form $H=\sum_{i,j,k}D_{i,j,k}$ as along with any such term $D_{i,j,k}$ which cannot be written as $X_0 \otimes D_{i,j}$ there would have to also be a term $-D_{i,j,k}$. Therefore for off-diagonal terms $D_{i,j,k}$, either of the two will be non stoquastic. 
	Each term $X_0 \otimes D_{i,j}$ is stoquastic iff $\forall x,\; \bra{x} D_{i,j} \ket{x} \leq 0$ and the most general form for $D_{i,j}$ is
	\begin{equation}
	D_{i,j}=\alpha_{ij}^{I}I_{ij}+\alpha_{ij}^{IZ}Z_j+\alpha_{ij}^{ZI}Z_i+J_{ij}Z_iZ_j,
	\end{equation}
	with the restrictions $\sum_{i,j:i<j}\alpha_{ij}^{I}=E_0$, $\sum_{i:i<j}\alpha_{ij}^{IZ}=0, \; \forall j$ and $\sum_{j:j>i}\alpha_{ij}^{ZI}=0,\; \forall i$.
	
	The condition $\forall x,\; \bra{x} D_{i,j} \ket{x} \leq 0$ implies that
	
	\begin{align}
	    \alpha_{ij}^I &\leq \min_{\Delta_i=\pm 1,\Delta_j=\pm 1}[-\Delta_j\alpha_{ij}^{IZ}-\Delta_i\alpha_{ij}^{ZI}-\Delta_i\Delta_j J_{ij}] \nonumber \\ &\leq -1.
	\end{align}
	
	If $H_{\text{Ising}}$ is frustrated, $E_0 >  -\sum_{(i,j)\in E}$ as not all terms contribute negatively: this is then in contradiction with the requirement that $\alpha_{ij}^I \leq -1$ and $\sum_{i,j:i< j}\alpha_{ij}^I=E_0$.
\end{proof}


Note that the result does not preclude the possibility that Hamiltonians constructed in this proposition are $(k\geq 4)$-termwise stoquastic. In fact, if we consider a slightly more general setting we will see that by increasing the locality of the terms into which $H$ is decomposed, $H$ can be seen to be $(m>k)$-termwise stoquasic even if it is not $k$-termwise stoquastic. This is due to the correspondence between the frustration of the classical Hamiltonian acting on $n-1$ qubits and the locality of the terms necessary to have a termwise decomposition. As we shall see in Section \ref{sec:hardglob} this correspondence also leads to the hardness of deciding global stoquasticity. Building up on the idea of Proposition \ref{prop1} we first show that for Hamiltonians of this form, a termwise stoquastic decomposition exists iff a frustration free decomposition exists for the classical Hamiltonian acting on $n-1$ qubits. 
\begin{proposition}
	Let $H_{\rm class}$ be a traceless $k$-local Hamiltonian on $n-1$ qubits, with terms diagonal in the computational basis and let $E_0 < 0$ be its lowest eigenvalue. Let $H=X_0 \otimes (E_0-H_{\rm class})$.  $H$ is $(m+1)$-termwise stoquastic if and only if there exists a $(m\geq k)$-local decomposition of $H_{\rm class}$ which is frustration-free. The frustration-free $m$-local decomposition is a decomposition $H_{\rm class}=\sum_i D_i$ where each Hermitian classical term $D_i$ acts on at most $m$ qubits non-trivially and $E_0=\sum_i E_g(D_i)$ with $E_g(D_i)$ the lowest eigenvalue of the term $D_i$.
	\label{prop:class}
\end{proposition}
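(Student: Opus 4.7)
The plan is to extend the argument of Proposition~\ref{prop1} to arbitrary locality by exploiting the rigid structure of $H = X_0 \otimes (E_0 - H_{\rm class})$, whose only off-diagonal entries sit at positions $(\ket{x_0, x}, \ket{y_0, x})$ with $x_0 \neq y_0$. For the forward direction, I would first show that in any $(m+1)$-termwise stoquastic decomposition $H = \sum_i T_i$, each real symmetric $T_i$ can itself have off-diagonal entries only at those same positions: at any off-diagonal location $(a,b)$ with $\bra{a}H\ket{b}=0$, stoquasticity forces each $\bra{a}T_i\ket{b} \leq 0$ while their sum vanishes, so each summand is zero. The off-diagonal part of $T_i$ is then necessarily of the form $X_0 \otimes D_i^X$ with $D_i^X$ a diagonal operator on qubits $1,\ldots,n-1$; stoquasticity gives $D_i^X \leq 0$ entrywise, and $(m+1)$-locality of $T_i$ restricts the support of $D_i^X$ to at most $m$ qubits.

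Matching $X_0$-flip matrix elements in $\sum_i T_i = H$ next yields $\sum_i D_i^X = E_0 - H_{\rm class}$. The key step is then to apply the inequality
\begin{equation}
0 = \max_x \sum_i \bra{x} D_i^X \ket{x} \leq \sum_i \max_x \bra{x} D_i^X \ket{x} \leq 0,
\end{equation}
where the first equality uses $E_0 = E_g(H_{\rm class})$ and the last inequality uses $D_i^X \leq 0$. Equality throughout forces $\max_x \bra{x} D_i^X \ket{x} = 0$ for every $i$. Defining $D_i := -D_i^X + e_i I$ for any constants with $\sum_i e_i = E_0$ then produces diagonal $m$-local terms obeying $E_g(D_i) = e_i$, $\sum_i D_i = H_{\rm class}$, and $\sum_i E_g(D_i) = E_0$, which is exactly a frustration-free $m$-local decomposition.

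The converse is cleaner. Starting from a frustration-free decomposition $H_{\rm class} = \sum_i D_i$ with $E_0 = \sum_i E_g(D_i)$, I would shift to $\tilde D_i := D_i - E_g(D_i) I \geq 0$, yielding $E_0 - H_{\rm class} = -\sum_i \tilde D_i$ and hence $H = \sum_i \bigl(-X_0 \otimes \tilde D_i\bigr)$. Each summand is real Hermitian, $(m+1)$-local, and its off-diagonal entries $-\bra{x}\tilde D_i\ket{x}$ are non-positive, so this is a valid $(m+1)$-termwise stoquastic decomposition.

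The only delicate point is the forward direction: one must combine the ``no cancellation'' argument (pinning every $T_i$'s off-diagonal part to the $X_0 \otimes D_i^X$ template) with the saturation of the max-of-sums inequality, which is precisely what turns termwise stoquasticity of $H$ into a frustration-free decomposition of $H_{\rm class}$ with exactly matching locality.
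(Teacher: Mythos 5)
Your proof is correct and follows essentially the same route as the paper's: the no-cancellation argument pins every term's off-diagonal part to the form $X_0\otimes D_i^X$, and saturating the sum-of-minima inequality (your max-of-sums step is the paper's "$E_0\le\sum_i E_0(\tilde D_i)$, which is only possible when the $\tilde D_i$ are frustration-free") yields the frustration-free decomposition; the converse via the shift $D_i-E_g(D_i)I$ is identical. You merely spell out more explicitly the steps the paper delegates to "the same argument as in Proposition 1."
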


\begin{proof}
	Assume we have a frustration-free $m$-local decomposition $H_{\rm class}=\sum_i D_i$. A term $X_0 \otimes (E_0(D_i)-D_i)$ is then stoquastic and $(m+1)$-local and hence the sum of these terms is a $(m+1)$-termwise stoquastic decomposition of $H$. Assume $H$ is $(m+1)$-termwise stoquastic, this then induces (by the same argument as in the proof of proposition \ref{prop1}) a decomposition of $H$ of the form $H=\sum_i X_0 \otimes (\alpha_i I -\tilde{D}_i)$ where $\tilde{D}_i$ is traceless, diagonal and $m$-local, and stoquasticity of each term implies that $\alpha_i -E_0(\tilde{D}_i)\leq 0$. In addition, we must have $\sum_i \alpha_i=E_0$ and 
	$\sum_i \tilde{D}_i=H_{\rm class}$ implying that $E_0 \leq \sum_i E_0(\tilde{D}_i)$ which is only possible when the $\tilde{D}_i$ form a frustration-free decomposition of $H_{\rm class}$.
\end{proof}

In anticipation of our later result, Theorem \ref{thrm:Globalstoqu_conp}, we note that we can already see this proposition as a potential obstacle to efficiently deciding whether a Hamiltonian is globally stoquastic. In particular, one strategy for deciding whether the Hamiltonian is globally stoquastic would go via finding an $m$-termwise stoquastic decomposition. However, being able to do so would imply the ability to find an $(m-1)$-local frustration-free decomposition for $H_{\rm class}$. Since this would solve the problem of finding the ground state energy for classical Hamiltonians which is generically \NP{}-complete, see Theorem \ref{Barahona}, we conclude that this strategy will not be computationally efficient. In Theorem \ref{thm:coNP} we will indeed prove that the problem of deciding whether a Hamiltonian is globally stoquastic is \coNP{}-complete by showing that a YES instance of an \NP{}-complete classical lowest eigenvalue problem can be 1-1 mapped onto a NO instance of the problem.

\section{Local Hamiltonian problem for stoquastic Hamiltonians}
\label{sec:globLH}

In the previous section we have seen the difference between global and termwise stoquasticity resulting from their definitions. In this section we will give an overview of the complexity theoretic results known for both termwise and globally stoquastic Hamiltonians. In particular we will be looking at the well-studied local Hamiltonian problem which can be seen as a formalisation of the ground energy problem in physics. For completeness we recall the definition of the standard local Hamiltonian problem (\lh{}).

\begin{definition}\label{LH-problem}
    Let $H=\sum_{i=1}^m H_i$ be a Hamiltonian acting on $n$ qubits where $m= \poly (n)$ and for all $i$, $H_i$ has non-trivial support on at most $k$ qubits (i.e $H$ is $k$-local). Furthermore, let the operator norm $||H_i|| < \poly (n)$ for all $i$. Given are also two constants $a,b$ with $|a-b|\geq 1/{\rm poly}(n)$. The $k$-local Hamiltonian problem (\lh{}) amounts to deciding whether 
    \begin{enumerate}
        \item YES: the smallest eigenvalue of $H$ is less than or equal to $a$, 
        \item NO: the smallest eigenvalue of $H$ is greater than $b$,
    \end{enumerate}
    under the promise that one of the two is true.
\end{definition}

It was first shown by Kitaev that this problem \lh{} is \textbf{QMA}-complete. The natural question then arises whether the problem remains hard under further restrictions on the Hamiltonian. Here we will list the known results concerning termwise stoquastic Hamiltonians and provide an extension of the proof of containment in \textbf{StoqMA} found in \cite{BBT:stoq}. With this new result we show that the complexity of the stoquastic local Hamiltonian problem is the same, independent of the notion of stoquasticity used. 

\subsection{Complexity of the globally stoquastic LH problem}

In \cite{BDT:stoq} the complexity of the globally stoquastic local Hamiltonian problem was analysed and found to be \textbf{MA}-hard and furthermore placed in \textbf{AM}.
The complexity classes \textbf{MA}$\subseteq$\textbf{AM} can be seen as probabilistic analogues to \NP{} with respectively one or two rounds of communication with the prover. The crucial ingredient for the proof of  containment in \textbf{AM} lies in the fact that the matrix 
\begin{equation}
    G=\frac{1}{2}\left(I-\frac{H}{||H||}\right), 
\end{equation}
is non-negative which is ensured when $H$ is globally stoquastic.

Under the additional assumption of termwise stoquasticity it was later shown that the Local Hamiltonian problem is \textbf{StoqMA}-complete \cite{BBT:stoq}. \textbf{StoqMA} is a complexity class contained in $\textbf{AM}$ and is defined as \textbf{QMA} but with a restricted verifier circuit. A stoquastic verifier circuit includes ancilla qubits initialised in $\ket{0}$ or $\ket{+}$, classical reversible gates, and single final measurement of a qubit in the $X$ basis. The verifier accepts on measuring $\ket{+}$ and rejects on measuring $\ket{-}$. A promise decision problem is contained in \textbf{StoqMA} when in the YES case, there exists a witness $\ket{\psi}$ which leads to an acceptance probability which is at least $p_{\rm yes}$. In the NO case, for all witnesses, the acceptance probability is at most $p_{\rm no}$ and one has $p_{\rm yes}-p_{\rm no}\geq \frac{1}{{\rm poly}(n)}$. For a precise definition, see Definition 4 in \cite{BBT:stoq}.
What is noteworthy is that one cannot necessarily amplify the gap between $p_{\rm yes}$ and $p_{\rm no}$ in \textbf{StoqMA}, see \cite{AGL}.


Our new contribution is to show that the globally stoquastic local Hamiltonian problem is also contained in \textbf{StoqMA}.

\begin{theorem} \label{thm:globalLH}
    The local Hamiltonian problem is contained in \textbf{StoqMA} for globally stoquastic $O(1)$-local Hamiltonians.
\end{theorem}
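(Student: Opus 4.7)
The plan is to adapt the BBT~\cite{BBT:stoq} proof of \textbf{StoqMA}-containment for termwise-stoquastic Hamiltonians to the globally-stoquastic case. I first rescale so that $\|H\|\le 1$ and set $G=(I-H)/2$. Global stoquasticity guarantees $G_{xy}\ge 0$ for all $x,y$, and the YES/NO promise on the smallest eigenvalue of $H$ translates to an inverse-polynomial gap on $\max_\psi\langle\psi|G|\psi\rangle$. It therefore suffices to build a stoquastic verifier whose acceptance probability on witness $\ket{\psi}$ takes the form $\alpha+\beta\langle\psi|G|\psi\rangle$ with $\alpha,\beta$ explicit and $\beta\ge 1/\poly(n)$, which is enough by the definition of \textbf{StoqMA}.

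Since $H$ is $k$-local with $k=O(1)$, I would decompose $G$ by bit-flip pattern as $G=G^{(0)}+\sum_{0<|f|\le k} G^{(f)}$, where $G^{(0)}$ is the non-negative diagonal with entries $(1-H_{xx})/2$, and for $f\neq 0$ we have $G^{(f)}=D_f X^f$ with $D_f$ a non-negative diagonal satisfying $D_f(x)=-H_{x,x\oplus f}/2$ (Hermiticity of $H$ gives $D_f(x)=D_f(x\oplus f)$, so $D_f$ commutes with $X^f$). Since $k$ is constant there are only $\poly(n)$ non-trivial flip patterns, and the key replacement for BBT's termwise-stoquasticity assumption is that each weight $D_f(x)$---though it may depend on many qubits of $x$---is efficiently classically computable from $x$ and the Pauli decomposition of $H$. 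The verifier samples a flip pattern $f$ using $O(\log n)$ $\ket{+}$ ancillas, conditionally applies $X^f$ to the witness register (a classical reversible operation), and then invokes a weight gadget that folds $D_f(x)$ (or $(1-H_{xx})/2$ when $f=0$) into the acceptance probability via additional $\ket{+}$ ancillas serving as uniform random bits, a classical reversible comparison between the computed weight and those bits, and a single $X$-basis measurement on a designated output qubit.

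The main obstacle is this weight gadget: unlike in BBT, where each local term comes with a bespoke $O(1)$-sized stoquastic sub-circuit, here each $G^{(f)}$ has $O(1)$-sized flip support but potentially globally supported weight $D_f(x)$, so the gadget must perform a $\poly(n)$-size classical reversible computation of $D_f$ while staying within the \textbf{StoqMA} primitives (classical reversible gates on $\ket{0}$ and $\ket{+}$ ancillas, together with the final $X$-basis measurement). The poly-time computability of $D_f(x)$ from the local description of $H$ is what makes this admissible; care is then needed to ensure the combined circuit produces exactly the intended non-negative matrix element $G^{(f)}_{x,x\oplus f}$ rather than a spurious diagonal admixture, which may require absorbing unwanted diagonal contributions into a constant shift of $H$. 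Once the gadget is in place the acceptance probability assembles into the desired affine function of $\langle\psi|G|\psi\rangle$ by linearity, translating the eigenvalue gap of $G$ into the \textbf{StoqMA} completeness/soundness gap.
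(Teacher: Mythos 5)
Your proposal is correct and follows essentially the same route as the paper: your decomposition of $G$ by bit-flip pattern into non-negative, efficiently computable diagonal weights $D_f$ times $X^f$ is the paper's Lemma~\ref{lem:globaldecomposition} (there phrased via subsets $S$ and classical circuits $U_{S,x}$ that conjugate the multi-qubit flip to a single $X$ tensored with a classical Hamiltonian), and your weight gadget---reversibly computing $D_f(x)$ and comparing it against uniform random bits drawn from $\ket{+}$ ancillas before a single $X$-basis measurement---is exactly the random-threshold construction of the paper's Lemma~\ref{lem:verifiermeasure} built on the BBT measurement primitives. The one step you leave partly unexecuted (avoiding a spurious diagonal admixture) is resolved in the paper by the $\ket{0}\bra{0}\otimes X$ measurement of \cite{BBT:stoq} combined with the conjugation by $U_{S,x}$.
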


Taken together with the result that 2-local termwise stoquastic local Hamiltonian problem is \textbf{StoqMA}-hard \cite{BBT:stoq}, this gives a tight characterisation of the complexity of the globally stoquastic local Hamiltonian problem.
\begin{corollary}\label{stoqma-complete}
 The local Hamiltonian problem is \textbf{StoqMA}-complete for globally stoquastic Hamiltonians.
\end{corollary}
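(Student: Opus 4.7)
The plan is to observe that the corollary follows by combining two already-established ingredients: the containment direction supplied by Theorem \ref{thm:globalLH}, and the hardness direction inherited from the known \textbf{StoqMA}-hardness of the termwise stoquastic local Hamiltonian problem \cite{BBT:stoq}. So the proof is essentially a short reduction argument, not a fresh technical construction.

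First, I would invoke Theorem \ref{thm:globalLH} to get that \lh{} restricted to globally stoquastic $O(1)$-local Hamiltonians lies in \textbf{StoqMA}. This handles the upper bound.

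For the hardness direction, the key observation is simply that the class of globally stoquastic local Hamiltonians strictly contains the class of $O(1)$-termwise stoquastic local Hamiltonians, since (by Definition \ref{def:termStoq} and Definition \ref{def:globalStoq}) any $H=\sum_i D_i$ with each $D_i$ real and off-diagonally non-positive is itself real and off-diagonally non-positive. In particular, every $2$-local termwise stoquastic instance of \lh{} is also a valid instance of the globally stoquastic \lh{} problem with identical threshold parameters $a,b$, so the identity map is a trivial polynomial-time many-one reduction from $2$-local termwise stoquastic \lh{} to globally stoquastic \lh{}. Since the former is \textbf{StoqMA}-hard \cite{BBT:stoq}, so is the latter.

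Combining the two directions yields \textbf{StoqMA}-completeness. There is no real obstacle here; the only thing to be careful about is checking that the reduction preserves the promise gap $|a-b|\geq 1/\poly(n)$ and the locality parameter, both of which are immediate since nothing about the instance is altered. One might also remark, for clarity, that this shows the precise complexity of \lh{} is insensitive to which of the two notions of stoquasticity one adopts — the strengthening from termwise to global leaves the class in exactly the same place.
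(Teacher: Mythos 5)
Your proposal is correct and matches the paper's own argument exactly: containment follows from Theorem \ref{thm:globalLH}, and hardness is inherited from the \textbf{StoqMA}-hardness of the 2-local termwise stoquastic problem \cite{BBT:stoq} via the observation that every termwise stoquastic Hamiltonian is globally stoquastic, so the identity map is the reduction. Nothing further is needed.
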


To prove Theorem \ref{thm:globalLH}, we adapt the proof that the termwise stoquastic Local Hamiltonian problem is in \textbf{StoqMA} from \cite{BBT:stoq}. We need two lemmas, the first provides a decomposition of a globally stoquastic Hamiltonian into terms of a particular form, and the second shows that such terms can be effectively measured by a stoquastic verifier circuit.
\begin{lemma}
\label{lem:globaldecomposition}
Let $H=\sum_{i=1}^m h_i$ be a globally stoquastic $k$-local Hamiltonian. Then there exists $\beta <0$ such that 
\begin{equation}
H+ \beta I = - H_0  +\sum_{j= 1}^{m'} U_j (- X \otimes H_j) U_j^{\dagger}
\label{eq:globaldecomposition}
\end{equation}
where $U_j$ is a quantum circuit of $X$ and CNOT gates and $H_j$ is a classical (diagonal) Hamiltonian with $H_j\ge 0$ for $j\geq 0$. The number of terms in the sum $m'\le m 2^{2k}$. One can (classically) efficiently find this decomposition, i.e. determine $\beta,U_j$ and the description of $H_j$ for $j\geq 0$.
\end{lemma}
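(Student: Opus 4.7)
The plan is to bypass the given decomposition $H = \sum_i h_i$ and instead regroup $H$ by the \emph{flip pattern} of each matrix entry. For each subset $S \subseteq [n]$, let $G_S$ be the operator whose matrix entries satisfy $\bra{x} G_S \ket{y} = \bra{x} H \ket{y}$ when $x \oplus y$ is the indicator vector of $S$, and vanish otherwise. Because $H$ is $k$-local, $G_S = 0$ whenever $|S| > k$, so $H = G_\emptyset + \sum_{1 \leq |S| \leq k} G_S$. Each nonempty $G_S$ factors uniquely as $G_S = X_S D_S$ with $D_S$ a real diagonal operator, and hermiticity of $G_S$ forces $X_S D_S = D_S X_S$, i.e.\ $D_S$ is $X_S$-invariant. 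Global stoquasticity of $H$ then translates, entry by entry, into the sign condition $D_S(y,y) \leq 0$ for every $y$ and every nonempty $S$ --- note that individual $h_i$'s may well have positive off-diagonal entries; only the full sum is constrained. Efficient $\poly(n)$-sized Pauli descriptions of every $D_S$ are obtained by expanding each $h_i$ in the Pauli basis on its $O(1)$-qubit support, grouping monomials by their $X$/$Y$-support, and rewriting $Y = iXZ$ (reality of $H$ forces an even number of $Y$'s in every appearing monomial, giving real signs).

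For each nonempty $S = \{i_1, \ldots, i_s\}$, apply a short Clifford circuit $V_S$ made of $s - 1$ CNOT gates with control $i_1$ and targets $i_2, \ldots, i_s$. In the Heisenberg picture this sends $X_{i_1} \mapsto X_S$, hence $V_S X_S V_S^\dagger = X_{i_1}$, while on the diagonal side it substitutes $Z_{i_j} \mapsto Z_{i_1} Z_{i_j}$ for $j \geq 2$, producing a new diagonal operator $\tilde D_S$ whose diagonal multiset coincides with that of $D_S$ (in particular still non-positive). Since $\tilde D_S$ commutes with $V_S X_S V_S^\dagger = X_{i_1}$, it is $X_{i_1}$-invariant, so $\tilde D_S = I_{i_1} \otimes D_S'$ with $D_S'$ acting only on qubits $\neq i_1$. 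Setting $H_S := -D_S' \geq 0$ and $U_S := V_S^\dagger$ (again a CNOT circuit), we obtain
\begin{equation}
G_S = U_S \bigl(-X \otimes H_S\bigr) U_S^\dagger.
\end{equation}
For the diagonal piece, choose $\beta := -\sum_i \|h_i\| - 1 < 0$, which is polynomially bounded and efficient to compute, so that $H_0 := -(G_\emptyset + \beta I)$ is diagonal and positive semidefinite. Summing over $S$ yields $H + \beta I = -H_0 + \sum_S U_S(-X \otimes H_S) U_S^\dagger$, and since each $h_i$ contributes to at most $2^k$ flip patterns the total count is bounded by $m' \leq m(2^k - 1) \leq m 2^{2k}$.

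The main obstacle is conceptual rather than computational: one has to recognise that the right atomic objects are the flip-pattern blocks $G_S$ and \emph{not} the given terms $h_i$. Global stoquasticity is a property of the matrix $H$ as a whole, and individual $h_i$'s may have off-diagonal entries of either sign, so any scheme that manipulates each $h_i$ in isolation is doomed. Regrouping by the bit-string difference $x \oplus y$ is precisely the minimal refinement that inherits the sign constraint termwise; once this reorganisation is in place, the Clifford concentration of $X_S$ onto a single qubit, the $X_{i_1}$-invariance that extracts $D_S'$, the positivity of $H_S$, and the counting bound all follow by routine bookkeeping.
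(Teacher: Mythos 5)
Your proof is correct and follows essentially the same route as the paper's: group the off-diagonal entries of $H$ by the set $S$ of flipped qubits, observe that global stoquasticity makes the associated diagonal coefficients non-positive, and conjugate by a classical reversible circuit to bring each block into the form $-X\otimes H_S$ with $H_S\ge 0$. The only (harmless) difference is that the paper further splits each flip class by the value $x$ of the bits on $S$ and uses circuits $U_{S,x}$ of $X$ and CNOT gates mapping $\ket{x}\mapsto\ket{0^{|S|}}$, whereas you keep each class whole and exploit the $X_S$-invariance of $D_S$ (forced by hermiticity) to factor out a single $X$ via CNOTs alone --- which in fact gives the slightly tighter count $m'\le m2^{k}$.
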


\begin{proof}
Let $-H_0$ be the diagonal part of $H$, shifted down by $\beta I$ if necessary to ensure that $H_0 \ge 0$.

For a subset $S$ of the qubits, and a bit string $x \in \{0,1\}^{|S|}$, let 
\begin{align*}
H_{S,x}&=-\sum_{y\in \{0,1\}^{n-|S|}}\bra{xy}H\ket{\bar{x}y}\ket{y}\bra{y}, 
\end{align*}
where for notational convenience we have written the qubits of $S$ first.
Note that $H_{S,x}$ is the diagonal part of $-(\bra{x}\otimes I)H (\ket{\bar{x}} \otimes I)$, and therefore 
$H_{S,x}$ is a classical local Hamiltonian (Note that $H_{S,x}$ can be $I$).
For all $y$, $\bra{y}H_{S,x}\ket{y}=-\bra{xy}H\ket{\bar{x}y}\ge 0 $ since $H$ is globally stoquastic.

Let $U_{S,x}$ be a unitary that maps $\ket{x}$ to $\ket{0^{|S|}}$ and $\ket{\bar{x}}$ to $\ket{10^{|S|-1}}$.
Then the off-diagonal part of $H$ is equal to 
\begin{align}
&-\sum_{S,x} \left(\ket{x}\bra{\bar{x}}+\ket{\bar{x}}\bra{x}\right)  \otimes H_{S,x} \\
& =-\sum_{S,x} U_{S,x}\left( X \otimes \ket{0}\bra{0}^{\otimes |S|-1}\right)U_{S,x}^{\dagger} \otimes H_{S,x}
\end{align}
The claim now follows taking $H_j= \ket{0}\bra{0}^{\otimes |S|-1} \otimes H_{S,x}$, and $U_j=U_{S,x}\otimes I$.

We now count how many terms in this decomposition are non-zero.
Each of the $m$ original terms $h_i$ of $H$ acts non-trivially on a set $T_i$ of at most $k$ qubits and so can contribute to  
$H_{S,x}$ for subsets $S \subseteq T_i  $, of which there are at most $2^k$. 
And for each subset $S$, there are at most $2^{|S|}$ terms $H_{S,x}$, resulting in a bound on the total number $m'$ of $H_j$ terms of $m'\le m2^{2k}$. 

For efficiency of the construction, note that 
 $\beta$ can be set to $\beta=-\sum_i \norm{\diag(h_i)}$, where $\diag(h)$ denotes the diagonal part of $h$.
We can go through all $m2^k$ possible subsets $S$ described in the previous paragraph. Given $S$, all $H_{S,x}$ and $U_{S,x}$ can then efficiently be constructed.
\end{proof}

Next we need to show we can use a stoquastic verifier circuit to effectively measure a term of the form $-H$ or $-X \otimes H$, where $H$ is a classical Hamiltonian.
\begin{lemma}
\label{lem:verifiermeasure}
Let $H$ be a classical local Hamiltonian with $H\ge 0$, and with $\norm{H} \le M$. Let $\Pi_{\ge \alpha}$ be the projector onto strings $\ket{z}$ such that $\bra{z}H\ket{z} \ge \alpha$. 
Then there exists a stoquastic verifier circuit such that the acceptance probability on input $\ket{\psi}$ is equal to $\bra{\psi}\frac{I+G}{2}\ket{\psi}$ where $G$ can be:
\begin{enumerate}

    \item $\ket{0}\bra{0}$ or $\ket{0}\bra{0}\otimes X$
    \item $\Pi_{\ge \alpha}$ or $\Pi_{\ge \alpha}\otimes X$
    \item $\frac{1}{M}H$ or $\frac{1}{M}H \otimes X$.
\end{enumerate}
\end{lemma}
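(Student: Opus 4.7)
The plan is to exhibit, for each of the three cases, an explicit stoquastic verifier circuit and to compute its acceptance probability directly. Writing $V$ for the reversible circuit applied, $m$ for the index of the qubit measured in the $X$ basis, and $\ket{\mathrm{anc}}=\ket{0^a}\ket{+^b}$ for the ancilla state, one has
\begin{equation*}
p_{\text{acc}}(\psi) = \tfrac{1}{2}\bigl(1 + \bra{\psi,\mathrm{anc}} V^{\dagger} X_{m} V \ket{\psi,\mathrm{anc}}\bigr),
\end{equation*}
so the task reduces to showing that the partial inner product $\bra{\mathrm{anc}} V^{\dagger} X_m V \ket{\mathrm{anc}}$, viewed as an operator on the witness register, equals the desired $G$.

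First I would handle Type 1. Using one $\ket{+}$ ancilla (the measured qubit), one $\ket{0}$ ancilla, and the input qubit, take $V$ to be the single Toffoli gate controlled by the input qubit and the $\ket{+}$ ancilla, with the $\ket{0}$ ancilla as target. A direct computation in the computational basis gives $p_{\text{acc}}(\psi) = \tfrac{1}{2}(1 + |\langle 0|\psi\rangle|^2)$, establishing the claim for $G = \ket{0}\bra{0}$. The variant $G = \ket{0}\bra{0} \otimes X$ is handled by additionally inserting a CNOT from the $\ket{+}$ ancilla to the extra input qubit: conjugating $X_m$ by this CNOT produces the $X$ factor on that qubit in the partial inner product, while conjugation of $X_m$ by the Toffoli still produces the factor $\text{CNOT}(q_0 \to q_0')$ whose partial contraction against $\ket{0}$ of the $\ket{0}$-ancilla gives exactly $\ket{0}\bra{0}$ on the input qubit.

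For Type 2, I would use that the indicator $f(z) := [\,\bra{z}H\ket{z} \geq \alpha\,]$ is classically computable from $\ket{z}$ in polynomial time since $H$ is a classical local Hamiltonian. Realise it as a classical reversible subcircuit that writes $1-f(z)$ into a fresh $\ket{0}$ ancilla (using further ancillas for intermediate junk), and then apply the Type 1 construction with this $f$-ancilla playing the role of the ``input qubit''. Because the reversible subcircuit only classically copies $f(z)$ into a dedicated qubit without disturbing the $\ket{z}$ basis of the witness, the acceptance probability evaluates to $\sum_z |\psi_z|^2 \tfrac{1 + f(z)}{2} = \bra{\psi}\tfrac{I + \Pi_{\geq\alpha}}{2}\ket{\psi}$. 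The $\otimes X$ variant follows by the same CNOT trick as in Type 1.

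The harder case is Type 3. I would first increase $M$ if necessary so that it is a power of two, and use that $H$ may be assumed to have dyadic entries with $N = \text{poly}(n)$ fractional bits. Picking $K := M \cdot 2^N$ (also a power of two) together with thresholds $\alpha_k := kM/K$ for $k = 1, \dots, K$, the layer-cake identity yields the pointwise equality $\tfrac{1}{K}\sum_{k=1}^K \Pi_{\geq \alpha_k} = H/M$. To realise this as a single stoquastic verifier I add a register of $\log_2 K = \text{poly}(n)$ further $\ket{+}$ ancillas holding the selection index $k$, and coherently execute the Type 2 verifier with threshold $\alpha_k$ controlled on the selection register. A short computation then shows that the partial inner product over the selection register is diagonal in $k$ (the cross terms for $k\neq k'$ vanish because $\ket{k},\ket{k'}$ are orthogonal), so the total acceptance probability equals the average of the branch acceptance probabilities; combined with the layer-cake identity this gives $p_{\text{acc}}(\psi) = \bra{\psi}\tfrac{I + H/M}{2}\ket{\psi}$, and the $\otimes X$ variant is again handled branch-by-branch. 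I expect the main technical obstacle to be this vanishing-of-cross-terms verification, together with the bookkeeping needed to rescale $M$ and the precision of $H$ so that the layer-cake identity holds exactly.
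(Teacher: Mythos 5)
Your proof is correct and follows essentially the same route as the paper's: an explicit stoquastic circuit for the base case 1 (which the paper simply imports from Lemmas 2--3 of \cite{BBT:stoq}), reversible computation of the threshold indicator for case 2, and a randomized threshold combined with the layer-cake identity for case 3. The only difference is that you discretize the threshold over dyadic values and realize the randomness coherently with $\ket{+}$ selection ancillas (with the cross terms vanishing by block-diagonality in the selection register), whereas the paper samples $\alpha$ uniformly from the continuous interval $[0,M]$ and integrates; yours is a more careful implementation of the same idea.
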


\begin{proof}
    1. is proved in Lemma~2 and Lemma~3 of \cite{BBT:stoq}. Here one uses that one can write the final measurement $\ket{+}\bra{+}=\frac{I+X}{2}=W^{\dagger}  \frac{I+\ket{0}\bra{0}}{2} W$ where $W$ is some specific classical circuit using ancilla qubits in $\ket{0}$ and $\ket{+}$ and $\ket{0}\bra{0}$ is the projector onto any of the input qubits. Similarly, \cite{BBT:stoq} proves that$\ket{+}\bra{+}=\frac{I+X}{2}=W^{\dagger}  \frac{I+\ket{0}\bra{0}\otimes X}{2} W$ for some classical circuit $W$.

For 2. the stoquastic verifier circuit can use its ancilla space to compute whether the energy of a given string $\ket{z}$ is above or below the threshold $\alpha$ and store the answer in a single output bit. Thus the measurements of 2. can be implemented using 1.

The measurements of 3. can be implemented by picking $\alpha$ uniformly at random from $[0,M]$ and then measuring $\Pi_{\ge \alpha}$ (or $\Pi_{\ge \alpha} \otimes X$) using 2. To see that this has the desired effect, let $\Pi_\lambda$ be the projector onto the eigenspace of $H$ of eigenvalue $\lambda$, and observe that 
\begin{align*}
    \frac{1}{M}\int_0^M \Pi_{\ge \alpha} d\alpha
    &= \frac{1}{M}\int_0^M \sum_{\lambda \ge \alpha}\Pi_{\lambda} d\alpha 
    \\&= \frac{1}{M}\sum_{\lambda \ge 0}\Pi_{\lambda}\int_0^{\lambda}  d\alpha
    \\&= \frac{1}{M}\sum_{\lambda \ge 0}\lambda\Pi_{\lambda}
    =\frac{1}{M} H
\end{align*}

\end{proof}
We are now ready to prove Theorem~\ref{thm:globalLH}

\begin{proof}[Proof (of Theorem~\ref{thm:globalLH})]
Let $H=\sum_i h_i$ be a globally stoquastic Hamiltonian and  let $M=2\sum_i ||h_i||$.
By Lemma~\ref{lem:globaldecomposition}, $H+\beta I$ has a decomposition of the form of Eq.~ \eqref{eq:globaldecomposition}. 
Note that for all $j=0,1 \ldots m'$, $||H_j|| \le M$ which can be argued as follows. $j=0$: Recall from Lemma~\ref{lem:globaldecomposition} that $\beta=-\sum_i \norm{\diag(h_i)}$. Then 
 \[\norm{H_0} =\norm{-\sum_i\diag (h_i)+\beta} \le 2\sum_i \norm{\diag(h_i)} \le M\]
  $j>0$: Since $H_{S,x}$ is diagonal, there exists some $y$ such that $\norm{H_{S,x}} = |\bra{y}H_{S,x}\ket{y}|=|\bra{xy}H\ket{\bar{x}y}|$. Then
 \begin{align}
     \norm{H_{S,x}} &=|\bra{xy}H\ket{\bar{x}y}| \notag \\ 
     &=|\frac{1}{2}(\bra{xy}+\bra{\bar{x}y}) H (\ket{xy}+\ket{\bar{x}y}) \notag \\
     &\qquad-\frac{1}{2}(\bra{xy}H\ket{xy}+\bra{\bar{x}y}H\ket{\bar{x}y})| \notag \\
     &\le 2\norm{H}\le M
 \end{align}

Therefore we can construct a verifier circuit that picks a random $j \in \{0,1,\dots m'\}$, applies $U_j^{\dagger}$ (or $I$ when $j=0$) and then applies the stoquastic verifier circuit of Lemma~\ref{lem:verifiermeasure} to effectively measure $X \otimes \frac{1}{M}H_j$ (or $\frac{1}{M}H_0$ if $j=0$).
On input $\ket{\psi}$, the overall circuit succeeds with probability \begin{align}
\bra{\psi} \frac{1}{m'+1}\sum_{j=1}^{m'}  \frac{I+\frac{1}{M} U_j X \otimes H_j U_j^{\dagger}}{2}\ket{\psi}+\notag \\ +\frac{1}{m'+1}\bra{\psi}\frac{I+\frac{1}{M}H_0}{2}\ket{\psi} \notag \\
=\frac{1}{2}\left(1-\frac{1}{m'M}\bra{\psi}H+\beta I\ket{\psi}\right).
\end{align} An optimal witness $\ket{\psi}$ is a ground state of $H$. Thus we see that there is a $1/{\rm \poly}(n)$ gap between the acceptance probabilities of YES and NO instances of the local Hamiltonian problem as is required for the problem to be in \textbf{StoqMA}.
\end{proof}

\noindent {\em Remarks}: 

One can observe that Theorem \ref{thm:globalLH} can be extended to $O(\log(n))$-local Hamiltonians as long as there are at most $m\le {\rm poly}(n)$ terms $h_i$. \\

Eq.~\eqref{eq:globaldecomposition} in Lemma \ref{lem:globaldecomposition} gives an interesting decomposition of a globally stoquastic Hamiltonian, extending the decomposition for termwise stoquastic Hamiltonians in Lemma 2 in \cite{BBT:stoq}. This decomposition is {\em not} a termwise stoquastic decomposition as in Definition \ref{def:termStoq}, since the terms $U_j (-X\otimes H_j) U_j^{\dagger}$ are not necessarily local due the presence of the classical circuit $U_j$.

We note that the decomposition in Lemma \ref{lem:globaldecomposition} could also be used in the Trotterization of $\exp(-\beta H)$ for a globally-stoquastic Hamiltonians, used in some Monte Carlo algorithms. If $H=\sum_j h_j$ is not term-wise stoquastic, then a Trotterized step like $\exp(-\tau h_j)$ does not necessarily have the property that $\bra{x_{i+1}} \exp(-\tau h_j) \ket{x_{i}} \geq 0$ as $h_j$ is not stoquastic. Using the decomposition in Eq.~\eqref{eq:globaldecomposition} into terms $U_j(-X \otimes H_j) U_j^{\dagger}$, one can absorb the classical transformation $U_j$ into $\ket{x_i}$ and $\ket{x_{i+1}}$ and thus
\begin{align}
    \bra{x_{i+1}} \exp(\tau U_j (X \otimes H_j) U_j^{\dagger}) \ket{x_i}= \notag \\ \bra{x'_{i+1}} \exp(\tau X \otimes H_j) \ket{x'_i} \geq 0.
\end{align}

One should note that for globally-stoquastic Hamiltonians one can also use Eq.~\eqref{eq:def-weight} in Appendix \ref{sec:signProblem} as the nonnegative weight in a Monte Carlo algorithm.

An additional observation is this. For a termwise-stoquastic Hamiltonian $H$ the problem of deciding whether the Hamiltonian is {\em frustration-free}, --i.e. there is a state which is the ground state of each stoquastic term in the termwise decomposition--was shown to be \textbf{MA}-complete \cite{BBT:stoq, BravyiTerhalMA}. This decision problem was called stoquastic $k$-SAT. If, in addition, the promise gap (between YES and NO instances) is constant (rather than $1/{\rm poly}(n)$), this problem, was further shown to be \textbf{NP}-complete in \cite{AG:NP}.  
For globally-stoquastic Hamiltonians one cannot easily formulate such decision problem as we do not have a termwise stoquastic decomposition. However, one could be inspired by the decomposition in Eq.~\eqref{eq:globaldecomposition} and ask about the hardness of deciding whether for a set of nonnegative projectors $U_i \Pi_i U_i^{\dagger}$, where $U_i$ is an efficiently given classical Clifford circuit comprised of CNOT and X gates and $\Pi_i$ is a $O(1)$-local nonnegative projector (with only nonnegative matrix elements), there is a state such that 
\begin{align}
  \forall i,\;  U_i \Pi_i U_i^{\dagger} \ket{\psi}=\ket{\psi}?
  \label{eq:MAproblem?}
\end{align}
We consider it quite likely that this problem is also contained in \textbf{MA} (assuming the same promise as for stoquastic $k$-SAT) following the proof in \cite{BBT:stoq, BravyiTerhalMA}, but we have not examined this in detail \footnote{One could also let $U_i$ be a classical ${\rm poly}(n)$ circuit: one may suspect that this problem is no harder.}. As stoquastic $k$-SAT is \textbf{MA}-complete, it would imply the existence of a reduction of this problem to stoquastic $k$-SAT itself.

Deciding if a globally stoquastic Hamiltonian $H$ with a decomposition as in equation \eqref{eq:globaldecomposition} is frustration-free is equivalent to deciding if there is a state $\ket{\psi}$ satisfying equation \eqref{eq:MAproblem?}, where $U_i$ is as in \eqref{eq:globaldecomposition} and $\Pi_i$ is the projector onto the ground space of $-X \otimes H_i$. The problem in the previous paragraph allows for a more general form of the unitary $U_i$, but on the other hand the projectors $\Pi_i$ will not typically be $O(1)$ local.
The reason for these differences is that in the proof of containment in \textbf{MA} of \cite{BBT:stoq, BravyiTerhalMA}, the verifier performs a random walk on a subset of bitstrings (so called good strings) and accepts as long as she does not reach a bad string. In order to perform a single step in the walk the verifier has to be able to compute 

\begin{equation} \label{step}
P_{x \to y}^i=\sqrt{\frac{\bra{y}U_i\Pi_iU_i^{\dagger}\ket{y}}{\bra{x}U_i \Pi_i U_i^{\dagger}\ket{x}}}\bra{y}(I-\beta H)\ket{x}
\end{equation}
These probabilities are efficiently computable in the case where $U_i$ is a Clifford circuit and $\Pi_i$ is an $O(1)$ local projector.

\section{Determining whether a local Hamiltonian is globally or termwise stoquastic}
\label{sec:fixedbasis}
As we have seen in section \ref{global_vs_termwise} and section \ref{sec:globLH}, the notions of global and termwise stoquasticity can be distinct, although the local Hamiltonian problem is contained in \textbf{StoqMA} for both these classes via Theorem \ref{thm:globalLH}. In this section we shift gears and study the problem of deciding whether a given Hamiltonian is globally or termwise stoquastic. We find that deciding the former can be a significantly harder task than the latter. 

As we mentioned earlier, the stoquastic local Hamiltonian problems we just studied can be seen as a formalisation of the problem of finding the ground energy of some termwise or globally stoquastic Hamiltonian. In any practical scenario however, we would ideally like to be able to efficiently check whether a given Hamiltonian is indeed stoquastic or not before performing some quantum Monte Carlo simulation. In order to formalise this problem we define \glob{} and $m$\term{} as the problems of deciding whether a local Hamiltonian is globally stoquastic and respectively whether it is $m$-termwise stoquastic and ask how hard or easy it is to decide these problems.

First we ask about the efficiency with which one can determine whether a $k$-local Hamiltonian is $m$-termwise stoquastic with $m=O(1)$. In \cite{Marvian2018} it was shown that $m\term \in {\bf P}$, while \cite{KT:stoq} gave a simple explicit strategy for two-local Hamiltonians. We provide the proof for completeness:

\begin{proposition} \label{prop:termInP}
	One can efficiently determine whether a $k$-local Hamiltonian acting on $n$ qubits is $m$-termwise stoquastic where $k \leq m=O(1)$, or $m\term \in \Poly{}$.
	
\end{proposition}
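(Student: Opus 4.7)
The plan is to reduce the membership question $H \in {\cal C}_n(m)$ to a linear feasibility problem of polynomial size, which can then be decided in polynomial time by standard LP algorithms (ellipsoid or interior-point). Following the remark after Definition \ref{def:termStoq}, it suffices to exhibit any decomposition $H = \sum_i D_i$ with each $D_i$ an $m$-local stoquastic term, so a canonical form for the decomposition is convenient.

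First I would show that we may assume the decomposition has exactly one term for each $m$-subset $S \subseteq [n]$: given any decomposition, we group together all terms whose non-trivial support is contained in a fixed $m$-subset $S$ (padding by identity if needed). There are $\binom{n}{m} = {\rm poly}(n)$ such subsets. For each subset $S$ I introduce variables $\alpha_{P,S} \in \mathbb{R}$, indexed by the real Pauli strings $P$ supported on $S$ (i.e.\ tensor products of $\{I,X,Y,Z\}$ on the qubits of $S$ containing an even number of $Y$ factors, so $(4^m+2^m)/2 = O(1)$ strings per $S$), and set $D_S = \sum_P \alpha_{P,S}\, P$.

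Next, I would convert the conditions defining $m$-termwise stoquasticity into linear constraints on the $\alpha_{P,S}$. The equality $H = \sum_S D_S$ is equivalent to equating coefficients in the global real Pauli basis, yielding at most $\sum_{k'=0}^m \binom{n}{k'}(4^{k'}-1)+1 = {\rm poly}(n)$ linear equations (this is where $k \leq m$ is used, so every Pauli component of $H$ can in principle be produced by some $D_S$). Stoquasticity of $D_S$ is the requirement $\bra{x} D_S \ket{y} \leq 0$ for all $m$-bit strings $x \neq y$, which is a linear inequality in the $\alpha_{P,S}$; there are at most $2^m(2^m-1) = O(1)$ such inequalities per subset, hence ${\rm poly}(n)$ inequalities in total.

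Finally, I would conclude by invoking polynomial-time LP feasibility: the above system has ${\rm poly}(n)$ variables and ${\rm poly}(n)$ constraints, all with coefficients of bit-size polynomial in the input, so its feasibility (and hence the question $m\term$) is decidable in $\Poly{}$. I do not anticipate a serious obstacle here; the only point requiring a short justification is the reduction to the canonical ``one $D_S$ per $m$-subset'' form, which follows from the fact that both the equality $H = \sum_i D_i$ and the stoquasticity condition are preserved under grouping terms with a common support, since the sum of stoquastic operators is stoquastic.
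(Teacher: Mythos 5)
Your proof is correct, and it follows the same basic strategy as the paper's: parametrize candidate $m$-local terms by their real-Pauli coefficients and reduce $m$-termwise stoquasticity to a polynomial-size linear feasibility problem. The organization of the LP differs, though. The paper first splits $H=\sum_S M^{(S)}$ according to the set $S$ of qubits each term flips, notes that termwise stoquasticity decouples across flip sets (distinct flip sets give rise to disjoint off-diagonal matrix elements), and then tests each $M^{(S)}$ for membership in the conic hull of the ${\rm poly}(n)$ extreme rays of the relevant cone of stoquastic matrices --- a V-representation, which requires enumerating extreme rays and yields ${\rm poly}(n)$ independent small feasibility problems. You instead group terms by their support set, keep stoquasticity as explicit linear inequalities $\bra{x}D_S\ket{y}\le 0$ --- an H-representation --- and solve a single global LP. Your route is slightly cleaner in that it avoids the extreme-ray enumeration and counting step (which the paper treats rather briskly), and your reduction to one $D_S$ per $m$-subset is indeed immediate since sums of stoquastic operators are stoquastic; the paper's route has the advantage of exposing the flip-set decoupling, a structural fact it reuses elsewhere (cf.\ Lemma \ref{lem:bd}). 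Both yield valid polynomial-time algorithms.
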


\begin{proof}
	As in the proof of Lemma \ref{lem:bd}, we write $H=\sum_S M^{(S)}$ with $|S| \leq k$. We can ignore the purely diagonal terms in $H$ which flip no qubits, $(S=\emptyset)$. Each $M^{(S)}$ corresponds to different non-zero off-diagonal matrix elements, and thus termwise stoquasticity of $H$ implies termwise stoquasticity of each $M^{(S)}$. 
	For a particular set $S$, we can consider the set of $m$-local stoquastic matrices which only flip the qubits in $S$: this set is spanned by ${\binom{n-|S|}{m-|S|}}2^{m}/2={\rm poly}(n)$ matrices which are products of Pauli operators\footnote{This comes from choosing $m-|S|$ locations outside of $S$ which can be either $Z$ or $I$, and allowing each position in $S$ to be either $X$ or $Y$. The final factor of 2 is because there must be an even number of $Y$ terms for the matrix to be real.} , hence there will be ${\rm poly}(n)$ extremal points, call them ${\bf o}_i(S)$. One thus needs to solve the problem whether there are $p_i \geq0$ such that $M^{(S)}=\sum_i p_i {\bf o}_{i}(S)$ which is a linear feasibility problem. If there exists a (feasible) solution to this program, we move to the next term $M^{(S)}$ etc. until we have found solutions for all terms or for at least one value of $S$ there exists no feasible solution, in which case we output `not $m$-termwise stoquastic'. 
\end{proof}

If we bound the degree of each qubit in the interaction hypergraph, we have shown that global stoquasticity and $O(1)$-stoquasticity are equivalent in Lemma \ref{lem:bd}. In fact in such cases, instead of finding a convex decomposition as in Proposition \ref{prop:termInP}, it is simpler to test for global stoquasticity directly:

\begin{proposition}
	\label{lem:constantinteractions}
	If each qubit of a $k$-local Hamiltonian $H$ interacts with at most $l=O(1)$ other qubits and $k=O(1)$, then the problem of deciding where the Hamiltonian is globally stoquastic can be solved efficiently, that is, \GlobalStoq{} $\in$ \Poly.
\end{proposition}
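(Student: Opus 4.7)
The plan is to reduce checking global stoquasticity to a polynomial number of constant-time local checks, exploiting the bounded-degree structure. As in the proof of Lemma \ref{lem:bd}, I would write $H = \sum_S M^{(S)}$, where $M^{(S)}$ collects all parts of $H$ that flip exactly the qubits in $S$ and act diagonally on the other qubits. For any $x \neq y$, set $S = \{i : x_i \neq y_i\}$; then $\bra{x}H\ket{y} = \bra{x}M^{(S)}\ket{y}$, and the bounded-degree hypothesis forces $M^{(S)}$ to be supported on the neighborhood $N(S) := S \cup \bigcup_{q \in S}\mathrm{nbrs}(q)$ of size at most $k(l+1) = O(1)$. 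Consequently, this matrix element depends only on the values of $x$ restricted to $N(S)$, turning an exponential condition into a local one.

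First I would preprocess the input by expanding each of the $m = \poly(n)$ terms $h_i$ in the Pauli basis and grouping contributions according to their flip pattern, producing $M^{(S)}$ for every $S$ that appears with nonzero weight. Each $k$-local term contributes at most $2^k - 1$ off-diagonal flip patterns, so the number of distinct nonempty $S$ is at most $m(2^k - 1) = \poly(n)$. Next, for each such $S$, I would enumerate the $2^{|N(S)|} \leq 2^{k(l+1)} = O(1)$ computational basis configurations on $N(S)$, and for each one compute $\bra{x}M^{(S)}\ket{y}$ from the $O(1)$ terms whose support contains $S$ (a consequence of bounded degree, after merging duplicate supports), and verify it is both real and non-positive. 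If any check fails, output NO; otherwise output YES. Verifying reality of $H$ is subsumed by this scan, since it amounts to checking that every Pauli string in the expansion with an odd number of $Y$'s carries zero coefficient.

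The algorithm performs $\poly(n)$ local checks of constant cost, giving total runtime $\poly(n)$ and hence \GlobalStoq{} $\in$ \Poly{} under the stated assumptions. The argument is essentially a constructive version of Lemma \ref{lem:bd}, so I do not expect a serious conceptual obstacle; the one point that requires care is to confirm that bounded degree together with $k$-locality really do force only $O(1)$ terms to contribute to each $M^{(S)}$. This follows from the fact that any contributing term has support containing a fixed qubit of $S$, and since that qubit has at most $l$ neighbors, only $\binom{l}{k-1} = O(1)$ such $k$-subsets exist in its neighborhood; after merging terms with identical support in preprocessing, at most $O(1)$ distinct terms remain per support.
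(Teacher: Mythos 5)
Your proposal is correct and follows essentially the same strategy as the paper's proof: use the bounded degree to argue that each off-diagonal entry $\bra{x}H\ket{y}$ depends only on an $O(1)$-sized neighborhood of the flipped set $S$, so that global stoquasticity reduces to $\poly(n)$ many constant-size checks. Your write-up is somewhat more explicit (enumerating only the flip patterns that actually occur and handling the reality check), but the underlying argument is the same.
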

\begin{proof}
	Let $S \subseteq\{1,\dots, n\}$ be a subset of size $k$. 
	Consider all matrix entries $\bra{x}H\ket{y}$ where $x$ and $y$ differ only in $S$.  
	There are at most $kl=O(1)$ terms in the Hamiltonian which can contribute to these entries, so it is easy to check all these matrix entries are non-positive.
	It suffices to repeat this for all $\binom{n}{k}= \poly(n)$ subsets of size $k$, since $\bra{x}H\ket{y}=0$ for any $x,y$ which differ in more than $k$ places.
\end{proof}

Physically realistic Hamiltonians will typically be geometrically local, where the qubits are distributed $\Omega(1)$ apart in $\mathbb{R}^d$ and only interact with qubits $O(1)$ away. These Hamiltonians have the property that each qubit interacts with at most $O(1)$ other qubits and hence for these classes, global stoquasticity is equivalent to O(1)-termwise stoquasticity by Lemma \ref{lem:bd} and we can efficiently test for both local and global stoquasticity.

\subsection{Hardness of \texorpdfstring{$\glob$}{GlobStoq}}
\label{sec:hardglob}
In this section we will argue that $\glob$ is \textbf{coNP}-complete. To show this, we will consider the well-known \textbf{NP}-complete problem of determining the lowest energy of an Ising Hamiltonian with local fields, and show that the complement of this problem reduces to $\glob$. 

\begin{theorem}[Planar Spin Glass ({\sf PS}) \cite{Barahona_1982}]\label{Barahona}
	Given a planar graph $G=(V,E)$ and an integer $K$. Deciding whether there exists a configuration of $S_i\in\{+1,-1\}$ such that $$\sum_{(i,j)\in E}S_iS_j+\sum_iS_i\leq K$$ is an \textup{\textbf{NP}}-complete problem.
\end{theorem}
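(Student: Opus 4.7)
The plan is a two-part argument: first show NP-membership by exhibiting a polynomial-time verifier, and then establish NP-hardness via polynomial-time reduction from a known NP-complete problem on planar instances. Membership is immediate, since a spin configuration $S\in\{\pm 1\}^{V}$ is a witness of length $|V|$ whose energy is a sum of $|V|+|E|=\poly(n)$ bounded terms, so the inequality against $K$ can be checked in polynomial time. All the real work lies in the hardness direction.

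For hardness I would reduce from Maximum Independent Set on planar cubic graphs, which is well known to be NP-complete. Given such an instance $(G',k)$ with $G'=(V',E')$, the goal is to construct in polynomial time a planar graph $G$ together with an integer $K$ such that $G'$ admits an independent set of size $\ge k$ if and only if the Ising energy $\sum_{(i,j)\in E}S_iS_j+\sum_i S_i$ on $G$ admits a configuration of value $\le K$. The intended identification is $S_v=+1 \Leftrightarrow v\in I$, so that the antiferromagnetic-like edge term penalises pairs of neighbouring $+1$ spins (enforcing independence), while the local field $\sum_i S_i$ rewards larger sets.

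The nontrivial step is realising this mapping under the \emph{uniform} coefficients dictated by the statement of the theorem: neither $J_{ij}$ nor $h_i$ is free, so effective weights must be built combinatorially. I would therefore design small planar gadgets, glued along each vertex and edge of $G'$, whose boundary spins feel any required effective couplings and biases. Natural building blocks are short alternating chains (to rescale effective fields by known integer factors), and ``fan'' or ``crown'' gadgets that produce effective ferromagnetic or attenuated antiferromagnetic interactions between designated boundary spins. A careful energy bookkeeping over the gadgets would then show that the minimum energy of $G$ is an affine function of the maximum independent-set size of $G'$, so a suitable integer $K$ encodes the original threshold $k$.

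The main obstacle is to enforce planarity and exact gadget semantics simultaneously. Gadgets must be inserted so that no new crossings are created in the final embedding (this is where cubic input $G'$ helps, since each vertex has free planar slots), and the cumulative gadget contribution to the global energy must be predictable to within less than one energy unit so that the integer threshold $K$ cleanly separates YES from NO instances. Once the gadgets are constructed and analysed, correctness of the reduction follows by checking that minimum-energy configurations on $G$ are in bijection, modulo internal gadget degrees of freedom, with maximum independent sets of $G'$, which finishes the NP-completeness proof.
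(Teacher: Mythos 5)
The paper does not actually prove this statement; it imports it from Barahona's 1982 paper, whose hardness argument is indeed a gadget reduction from maximum independent set on planar graphs of maximum degree~3. So your overall strategy is the historically correct one, and your membership argument is fine. Nevertheless, as a proof the proposal has a genuine gap, in two respects.

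First, the sign bookkeeping of your ``intended identification'' is wrong as stated. With $S_v=+1\Leftrightarrow v\in I$, the field term $\sum_i S_i$ contributes $2|I|-|V|$, so minimizing the energy \emph{penalizes} rather than rewards larger sets; and the edge term $S_iS_j$ penalizes a pair of aligned $-1$ spins exactly as much as a pair of aligned $+1$ spins, so it does not by itself enforce independence of the $+1$ set. Concretely, writing $N=\{i:S_i=-1\}$ one finds for a cubic graph that minimizing the energy is equivalent to maximizing $4|N|-2e(N)$ (with $e(N)$ the number of edges inside $N$), whose maximizers are not independent sets: adding a vertex with one neighbour already in $N$ still improves the objective. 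So the ``direct'' encoding fails, and \emph{everything} rests on the gadgets. Second, those gadgets are never constructed: the proposal names the obstacles (realizing effective ferromagnetic couplings and sign-flipped or amplified fields out of uniform unit coefficients, preserving planarity, and controlling the additive gadget energy exactly) but resolves none of them, deferring the entire content of the hardness proof to ``a careful energy bookkeeping'' that is not carried out. Until the gadget family is exhibited and its ground-state energy computed as an exact affine function of $\alpha(G')$, there is no reduction, only a plan for one.
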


\begin{theorem}\label{thrm:Globalstoqu_conp} 
	The problem of deciding whether a $k$-local Hamiltonian is globally stoquastic (\GlobalStoq{}) is \textup{\textbf{coNP}}-complete, that is
		\GlobalStoq{} is in \coNP{} for $k = O(\log n)$ and is \coNP-hard for $k \ge 3$. 
	\label{thm:coNP}
\end{theorem}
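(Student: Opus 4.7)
The plan is to prove the two directions of the theorem separately: containment in \coNP{} for $k=O(\log n)$ via a short certificate of non-stoquasticity whose verification reduces to evaluating a single matrix entry, and \coNP{}-hardness for $k\ge 3$ via a Karp reduction from Planar Spin Glass (Theorem \ref{Barahona}) into the complement of \GlobalStoq{}, building on the Hamiltonian construction already used in Propositions \ref{prop1} and \ref{prop:class}.

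For containment, I would take as certificate a pair of computational-basis strings $(x,y)$ with $x\neq y$ and $\bra{x}H\ket{y}>0$. The verifier evaluates $\bra{x}H\ket{y}=\sum_{i=1}^{m}\bra{x}h_i\ket{y}$ by summing contributions from the $m=\poly(n)$ local terms. Each entry $\bra{x}h_i\ket{y}$ is a single element of a $2^k\times 2^k$ matrix with $\poly(n)$-bit entries and so can be read off in $\poly(n)\cdot 2^{O(k)}$ time. For $k=O(\log n)$ this is polynomial in $n$, and the verifier accepts iff the sum is strictly positive, placing \GlobalStoq{} in \coNP{}.

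For hardness, given an instance $(G,K)$ of Planar Spin Glass with Ising Hamiltonian $H_{\text{Ising}}=\sum_{(i,j)\in E} Z_iZ_j+\sum_i Z_i$ acting on the $n-1$ qubits associated to the vertices of $G$, I would construct the $n$-qubit Hamiltonian
\[
H \;:=\; X_0\otimes\bigl((2K+1)I-2H_{\text{Ising}}\bigr),
\]
which is 3-local because $H_{\text{Ising}}$ itself has only 1- and 2-local terms. The only nonzero off-diagonal entries of $H$ come from the $X_0$ flip on qubit 0, and equal $(2K+1)-2H_{\text{Ising}}(x)$ for any $x\in\{0,1\}^{n-1}$. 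Hence $H$ is globally stoquastic iff $H_{\text{Ising}}(x)\ge K+\tfrac{1}{2}$ for every $x$, and since $H_{\text{Ising}}(x)$ is integer-valued this is equivalent to $H_{\text{Ising}}(x)>K$ for all $x$ --- precisely the NO condition of Planar Spin Glass. Thus $(G,K)$ is a YES instance of PS iff $H$ is not globally stoquastic, yielding a polynomial-time reduction from an \NP{}-complete problem to the complement of \GlobalStoq{} and hence \coNP{}-hardness at $k\ge 3$.

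The conceptual backbone of the argument is the observation underlying Proposition \ref{prop:class}: global stoquasticity of a Hamiltonian of the form $X_0\otimes(cI-H_{\text{class}})$ is nothing other than the classical threshold condition $\min_x H_{\text{class}}(x)\ge c$, which is \NP{}-complete already for Ising models on planar graphs with fields. The only real technical subtlety I expect to encounter is the strict-versus-non-strict mismatch between the stoquasticity inequality and the Planar Spin Glass phrasing; the shift by $+\tfrac{1}{2}$ (equivalently, the $2K+1$ above) cleanly rules out the boundary case $\min H_{\text{Ising}}=K$, and all remaining bookkeeping --- 3-locality, polynomial description size, polynomial-time computation of matrix entries --- is straightforward.
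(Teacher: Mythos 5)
Your proposal is correct and follows essentially the same route as the paper: the \coNP{} certificate for NO instances is the same pair $(x,y)$ with $\bra{x}H\ket{y}>0$, and the hardness reduction builds the same Hamiltonian $X_0\otimes(cI-H_{\text{Ising}})$ from Planar Spin Glass, with your choice $c=(2K+1)$ (after scaling by $2$) playing exactly the role of the paper's shift $K+\epsilon^+$ with $0<\epsilon^+<1$ to resolve the strict-versus-non-strict boundary case.
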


\begin{proof}
	It is straightforward to see that \glob{} $\in$ \textbf{coNP}, as there exists an efficiently verifiable witness for its NO instances in the form of bitstrings $x,y\in\{0,1\}^n$ for which  $\bra{x}H\ket{y}>0$. Since $H$ is a $k$-local Hamiltonian, for $k=O(\log  n)$ we can efficiently evaluate such matrix elements and thus verify the witness.
	
	We proceed by showing that \glob{} is  \textbf{coNP}-hard and hence \textbf{coNP}-complete by reducing the complement of {\sf PS} (Theorem \ref{Barahona}) to it. Given an instance of {\sf PS} with graph $G=(V,E)$ such that $|V|=n$, define the $n+1$ qubit Hamiltonian
	\begin{align}\label{conp_proof}
	H=(K+\epsilon^+)X_{n+1}&-\sum_{(i,j)\in E}Z_iZ_jX_{n+1} \nonumber \\&-\sum_{i\in V}Z_iX_{n+1},
	\end{align}
	where $0<\epsilon^+<1$. 
 	Since the first $n$ qubits are only acted on by $Z_i$ terms and the last qubit is only acted on by a $X_{n+1}$ term, $\bra{x}H\ket{y}$ is non-zero only when $x$ and $y$ match on the first $n$ bits and differ on the last.
	
	For $x,y$ pairs of this form, let $x'$ denote the first $n$ bits of $x$ and $y$, so that
	\begin{align*}
	\bra{x}H\ket{y}&=  \bra{x'}(K+\epsilon^+)-\sum_{(i,j)\in E}Z_iZ_j \\
	& \quad-\sum_{i\in V}Z_i\ket{x'}\bra{x_{n+1}}X_{n+1}\ket{y_{n+1}} . \nonumber
	\end{align*}
	Note that $\bra{x_{n+1}}X_{n+1}\ket{y_{n+1}}=1$, and convert from $\{0,1\}$ to $\{-1,+1\}$ by setting $S_i=2x_i'-1=\bra{x'}Z_i\ket{x'}$, to get
    \[\bra{x}H\ket{y}=(K+\epsilon^+)-\sum_{(i,j)\in E}S_iS_j-\sum_{i\in V}S_i.\]
	
	Therefore $H$ is globally stoquastic if and only if for all $\{S_i\}$
	\[\sum_{(i,j)\in E}S_iS_j+\sum_{i\in V}S_i\geq K+\epsilon^+\]
	
	For a YES instance of {\sf PS}, there exists $\{S_i\}$ such that $\ \sum_{(i,j)\in E}S_iS_j+\sum_{i\in V}S_i\leq K<K+\epsilon^+$, implying a NO instance of \GlobalStoq.
	
	For a NO instance of {\sf PS}, for all  $\{S_i\}$,  $\sum_{(i,j)\in E}S_iS_j+\sum_{i\in V}S_i\geq K+1> K+\epsilon^+$ implying a YES instance of \GlobalStoq.

\end{proof}

{\em Remark}: Although deciding \glob{} is \textbf{coNP}-complete for $H$ as in Eq.~(\ref{conp_proof}) and thus is \textbf{coNP}-complete in general, such Hamiltonians can be sign-cured by means of a single-qubit Hadamard transformation on the qubit with label $n+1$. However, in Section \ref{gadgets} we will introduce so-called gadgets i.e additional terms which can be added to the Hamiltonian in order to restrict the set of possible sign-curing transformations. Under the addition of such terms the problem of sign-curing the type of Hamiltonians we just considered also becomes a hard computational problem.

\section{Global stoquasticity via local basis changes}
\label{sec:stoq-lb}

In the previous section we considered the difficulty of determining if a Hamiltonian is globally stoquastic in a fixed basis. In practice, one could try to find a suitable basis in which to represent the Hamiltonian such that it does not suffer from the sign problem if it did in the computational basis. Previous work \cite{ Marvian2018,KT:stoq,Klassen2019} has shown that it is $\NP$-hard to decide if there is a local change of basis such that a Hamiltonian is termwise stoquastic. Now we combine both questions, and consider the problem of determining if there exists a local change of basis such that a Hamiltonian is globally stoquastic. We start with a definition:
\begin{definition}
	Let $\mathcal{U}$ be a family of unitaries. Then $\mathcal{U}$-{\sf GlobalStoq} is the following problem:
	
	\textbf{Input:} Local Hamiltonian $H$
	
	\textbf{Problem:} Decide if there is a unitary  $U \in \mathcal{U}$ such that $UHU^{\dagger}$ is globally stoquastic.
\end{definition}
Observe that {\sf GlobalStoq} is the special case of $\mathcal{U}$-{\sf GlobalStoq} where $\mathcal{U}$ only contains the identity matrix. Our result is the following.

\begin{theorem}
	\label{thm:Sigmatwo}
	Let $\mathcal{U}$ be the set of unitaries which are products of single qubit unitaries. 
	Then $\mathcal{U}$-\GlobalStoq{} is $\Sigmatwo$-hard for 3-local Hamiltonians.
	$\mathcal{U}$-\GlobalStoq{} is contained in $\Sigmatwo$ if there is the additional promise that in a YES instance the curing unitary can be efficiently described in such a way that allows one to efficiently compute matrix entries of $UHU^{\dagger}$. 
\end{theorem}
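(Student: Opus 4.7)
The problem has the natural quantifier structure $\exists U \,\forall (x,y)\in\{0,1\}^n\times\{0,1\}^n: \bra{x} UHU^\dagger \ket{y} \le 0$, which both motivates placing it in $\Sigmatwo$ under the promise and invites a reduction from a $\Sigmatwo$-complete problem for the lower bound. For containment, a $\Sigmatwo$ verifier existentially guesses the polynomial-size description of $U$ supplied by the promise, universally receives a pair of bit strings $(x,y)$, and checks $\bra{x} UHU^\dagger \ket{y}\le 0$ in polynomial time. Since $U=\bigotimes_i U_i$ is a product of single-qubit unitaries, each $k$-local term of $H$ conjugates to a still $k$-local operator, so each matrix entry decomposes as a sum of at most $\poly(n)$ local contributions, each efficiently evaluable from the description of $U$.

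For $\Sigmatwo$-hardness, the plan is to reduce from a $\Sigmatwo$-complete problem such as MINMAX-2-SAT (equivalently, the complement of $\forall\exists$-3-SAT). The blueprint is the \coNP{}-hardness construction of Theorem~\ref{thrm:Globalstoqu_conp}: introduce a switch qubit $s$ such that every off-diagonal matrix element of $H$ carries a factor $X_s$, and encode a classical energy function on the remaining qubits whose values appear as $\bra{z,0}H\ket{z,1}$. The refinement needed for $\Sigmatwo$ is to split the remaining qubits into existential qubits $e_1,\dots,e_m$, one per existential variable, and universal qubits $u_1,\dots,u_n$, one per universal variable. Global stoquasticity then encodes a $\forall$ over assignments of the universal qubits exactly as in the \coNP{} construction, while the choice of single-qubit unitary on each $e_i$ encodes the assignment of the $i$th existential variable: each $U_{e_i}$ is restricted to a discrete two-element set corresponding to the truth values of $x_i$ (essentially $\{I,H\}$), so that the transformed operator $U_{e_i} Z_{e_i} U_{e_i}^\dagger$ contributes as either $Z_{e_i}$ or $X_{e_i}$ and thereby realises the substitution of $x_i$ into the classical encoding.

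The restriction on $U_{e_i}$ is achieved by attaching the gadgets developed in Section~\ref{gadgets} to each $e_i$; those gadgets are stoquastisable under a single-qubit unitary only when the unitary lies in the intended discrete set. The coefficients on the three-local encoding terms in $Z_{e_i}$, $Z_{u_j}$ and $X_s$ are chosen modelled on Eq.~(\ref{conp_proof}) so that the resulting global stoquasticity inequality is equivalent to the input formula being satisfied for every assignment of the universal variables, given the existential assignment dictated by $U$. Further gadgets must be placed on the switch qubit $s$ to block the trivial sign-curing by a Hadamard on $s$ flagged in the remark after Theorem~\ref{thrm:Globalstoqu_conp}. The main obstacle is this gadget design: the gadgets must simultaneously force the single-qubit unitaries into the intended discrete sets and remain compatible with the three-locality of the encoding terms, avoiding any uncancelled positive off-diagonal matrix entries in the transformed Hamiltonian that would spoil the correspondence with the $\Sigmatwo$ instance.
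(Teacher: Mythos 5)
Your overall route coincides with the paper's: containment by existentially guessing the promised description of $U$ and universally checking one matrix entry of $UHU^{\dagger}$, and hardness by reducing from the complement of \minmaxSAT{} (chosen over \forallexistSAT{} precisely to keep the construction $3$-local), using a switch qubit that carries the off-diagonal action and the gadgets of Section~\ref{gadgets} to force the curing unitary to lie in $\{I,W\}$ on the existential qubits and to be trivial on the universal qubits and the switch qubit. The quantifier structure you describe ($\exists$ over the Hadamard pattern, $\forall$ over matrix entries) is exactly how the paper proceeds.

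There is, however, one genuine gap, which you flag as ``the main obstacle'' but do not resolve: the clause encoding cannot be ``modelled on Eq.~(\ref{conp_proof})'' with bare $Z_{e_i}$, $Z_{u_j}$ factors. In the \coNP{} construction every off-diagonal matrix element flips only the switch qubit, so stoquasticity reduces to a family of scalar inequalities indexed by computational basis states. Here, once an existential qubit $e_i$ may be Hadamard'd, a term such as $X_s Z_{e_i} Z_{u_j}$ becomes $X_s X_{e_i} Z_{u_j}$, whose off-diagonal entries take both signs (depending on the value of $u_j$); no choice of sign for its real coefficient makes it stoquastic, and summing several such Ising-style terms does not repair this, because the diagonal operator multiplying the flip operator $X_sX_{e_i}$ must be entrywise of a single sign. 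The paper's key device is to encode each literal by an \emph{entrywise non-negative} operator --- $(X_i+I)$ or $(Z_i+I)$ for existential literals, $\ket{0}\bra{0}$ or $\ket{1}\bra{1}$ for universal ones --- so that each clause operator $P(c_1)P(c_2)$ remains entrywise non-negative in every Hadamard frame. Then in $X_s\otimes(kI-H_{\mathcal{C}})$ all off-diagonal entries that flip an existential qubit are automatically non-positive, and global stoquasticity collapses to the single condition that every diagonal entry of $H_{\mathcal{C}}(x)$ is at least $k$; the minimal such entry sits at $\ket{1^n}\ket{y}$ and equals the number of clauses violated by $(x,y)$, which is what ties stoquasticity to $\neg$\minmaxSAT. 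Without this entrywise-non-negativity trick your reduction, as written, produces uncancellable positive off-diagonal entries and does not go through.
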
 

To prove this theorem, we will need three different ingredients:
\begin{itemize}
	\item a convenient $\Sigmatwo$-complete problem to reduce from
	\item a "gadget" Hamiltonian construction to restrict the possible basis change unitaries
	\item a Hamiltonian which encodes our $\Sigmatwo$-complete problem.
\end{itemize}

\subsection{Complete problem for \texorpdfstring{$\Sigmatwo$}{Sigma2}}
The complexity class $\mathbf{\Sigmatwo}=\mathbf{NP^{NP}}$ sits in the second level of the polynomial hierarchy, and is the class of problems that can be solved by a polynomial time non-deterministic Turing machine (an $\NP$ machine), which can make queries to an oracle to $\NP$.

Equivalently, $\Sigmatwo$ is the class of problems for which there exists a polynomial time verifier $V$ such that a problem instance is a YES instance iff 
\[\exists x, \: \forall y \: V(x,y) \text{ accepts} \]
where $x$ and $y$ are bit strings of polynomial length. 

Recall that \GlobalStoq{} is a YES instance if there exists a basis change, such that all matrix entries of $H$ are non-positive. If we let $x$ be a description of a basis change and $y$ the specification of a matrix entry, then we see that  $\GlobalStoq{} \in \Sigmatwo$ if the basis change can be described by polynomially sized bit string $x$ in such a way that allows the verifier to compute a matrix entry in polynomial time.
Since the basis change is a product of 1-local unitaries, one might expect such an efficient description to exist, but it might be the case that these unitaries need to be specified to very high precision - which is why this condition is included in the statement of Theorem~\ref{thm:Sigmatwo}.

The complement of $\Sigmatwo$ is $\Pitwo$. A canonical complete problem for $\bf \Pi_2^p$ is $\forallexistSAT$ \cite{Schaefer2002}:
\begin{definition}[$\forallexistSAT$] 
	
	\textbf{Input:} Boolean formula $\mathcal{C}(x,y)$ in conjunctive normal form with 3 literals per clause.
	
	\textbf{Problem:} Decide if for all $x$, there exists a $y$ such that $\mathcal{C}(x,y)$ is true.
\end{definition}
The complement of $\forallexistSAT$ is complete for $\Sigmatwo$ (since $\Sigmatwo = \class{co-}\Pitwo$).
We could reduce directly from this problem to $\mathcal{U}$-\GlobalStoq, using the same construction as in Section~\ref{sec:Sigmatwohardness}, to obtain $\Sigmatwo$-hardness, but the resulting Hamiltonian would be 4-local. In order to achieve a hardness result for a 3-local Hamiltonian, we first find a more convenient complete problem for $\Sigmatwo$.
\begin{definition} [\minmaxSAT]
	\textbf{Input:} Boolean formula $\mathcal{C}(x,y)$ in conjunctive normal form with at most 2 literals per clause; and an integer $k$.
	
	\textbf{Problem:} Decide if for all $x$, there exists a $y$, such that  $\mathcal{C}(x,y)$ satisfies at least $k$ clauses.
\end{definition}
The problem \minmaxSAT{} is related to $\forallexistSAT$, in the same way that \problem{MAX-2-SAT} is related to \problem{3-SAT}.
In fact we can prove \minmaxSAT{} is $\Pitwo$-complete, by reducing from $\forall\exists$-SAT, with exactly the same proof method as the reduction from \problem{3-SAT} to \problem{MAX-2-SAT}.

\begin{lemma}
	\label{lem:minmaxsat}
	\minmaxSAT{} is $\Pitwo$-complete.
\end{lemma}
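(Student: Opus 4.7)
The plan is to mimic the classical Garey--Johnson--Stockmeyer reduction from \problem{3-SAT} to \problem{MAX-2-SAT} at the clause level, respecting the universal/existential partition of variables inherited from $\forallexistSAT$.

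Containment in $\Pitwo$ is immediate: a \minmaxSAT{} instance $(\mathcal{C},k)$ is a YES instance iff $\forall x\,\exists y$ such that the (polynomial-time computable) predicate ``the number of clauses of $\mathcal{C}(x,y)$ satisfied is at least $k$'' holds. This is manifestly in $\Pitwo$. For hardness we reduce from the $\Pitwo$-complete problem $\forallexistSAT$. Let $\mathcal{C}(x,y) = \bigwedge_{i=1}^m C_i$ with $C_i = (a_i \vee b_i \vee c_i)$ (literals over $x,y$). For each clause $C_i$ introduce a fresh auxiliary variable $d_i$ and replace $C_i$ by the ten (at most) 2-literal clauses
\begin{equation*}
a_i,\ b_i,\ c_i,\ d_i,\ \neg a_i \vee \neg b_i,\ \neg a_i \vee \neg c_i,\ \neg b_i \vee \neg c_i,\ a_i \vee \neg d_i,\ b_i \vee \neg d_i,\ c_i \vee \neg d_i.
\end{equation*}
Call the resulting 2-CNF formula $\mathcal{C}'(x,y,d)$, and set $k := 7m$. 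The universal variables remain $x$; the existential variables become $(y,d)$.

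The correctness rests on the standard local property of the gadget, which I would verify by a case analysis on the assignment to $(a_i,b_i,c_i)$: if $C_i$ is satisfied by $(x,y)$, then exactly $7$ of the ten clauses can be satisfied by choosing $d_i$ appropriately (specifically $d_i = 1$ if at least one of $a_i,b_i,c_i$ is true, and the count is tight); if $C_i$ is falsified then every choice of $d_i$ satisfies at most $6$ of the ten clauses. Summing over clauses yields the equivalence: for each fixed $x$, some $y$ satisfies $\mathcal{C}(x,y)$ iff some $(y,d)$ satisfies at least $7m$ clauses of $\mathcal{C}'(x,y,d)$. Quantifying universally over $x$ gives the reduction. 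The reduction is clearly polynomial-time computable, and the output has at most 2 literals per clause, so $\minmaxSAT \in \Pitwo\text{-hard}$, and combined with containment it is $\Pitwo$-complete.

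The only delicate point, and the step I would double-check most carefully, is the optimality claim for the gadget: one must verify that when $C_i$ is satisfied, the best completion of $d_i$ achieves exactly $7$ satisfied clauses in the block (not more), and that when $C_i$ is falsified the maximum is exactly $6$ (not more). This is a small finite case analysis on the $2^3 \times 2 = 16$ assignments to $(a_i,b_i,c_i,d_i)$, but it is the crux of the reduction, because it is what converts a per-clause Boolean satisfaction condition into a global counting threshold that commutes with the $\forall x\,\exists y$ alternation.
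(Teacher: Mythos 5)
Your proof is correct and takes essentially the same route as the paper: the identical ten-clause Garey--Johnson--Stockmeyer gadget with auxiliary variable $d_i$, the same threshold $k=7m$, and the same observation that containment in $\Pitwo$ is immediate from a clause-counting verifier. One small slip in the part you flagged for checking: the rule ``$d_i=1$ if at least one of $a_i,b_i,c_i$ is true'' fails when exactly one literal is true (that choice yields only $6$ satisfied clauses; $d_i=0$ gives $7$ there, and $d_i=1$ is needed only when all three are true), but the gadget's key property --- maximum exactly $7$ when $C_i$ is satisfied and at most $6$ otherwise --- does hold, so the reduction goes through.
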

\begin{proof}
    For containment in $\Pitwo$, we construct a verifier circuit $V(x,y)$ which evaluates all of the clauses of $C$ and accepts if at least $k$ clauses are satisfied. Then \minmaxSAT{} is a YES instance iff there exists an $x$ such that for all $y$, $V(x,y)$ accepts and hence \minmaxSAT{}$\in \Pitwo$.

	To prove hardness, let $\mathcal{C}(x,y)$ be a formula specifying an instance of $\forallexistSAT$ consisting of $m$ clauses on $n+l$ bits, where $x \in \{0,1\}^n$ and $y \in \{0,1\}^l$. We construct a formula $\mathcal{C}'$ of $10m$ clauses on $n+l+m$ bits, with at most 2 literals per clause, such that if $x\in \{0,1\}^n$ and $y \in \{0,1\}^l$ satisfies $\mathcal{C}$, then there exists $z \in \{0,1\}^m$ such that $(x,y,z)$ satisfies $7m$ clauses of $\mathcal{C}'$; and if $(x,y)$ does not satisfy $\mathcal{C}$ then $(x,y,z)$ satisfies strictly less than $7m$ clauses, for any $z \in \{0,1\}^m$.
	
	For each clause containing three literals $(a_i \vee b_i \vee c_i)$, replace it with 10 clauses $(a_i), (b_i), (c_i), (d_i), (\neg a_i \vee \neg b_i), (\neg a_i \vee \neg c_i), (\neg b_i \vee \neg c_i),(a_i \vee \neg d_i), (b_i \vee \neg d_i), ( c_i \vee \neg d_i)$ 
	
	Then if the original clause $(a_i \vee b_i \vee c_i)$ is satisfied, there is a choice of $d_i$ such that 7 of these clauses are satisfied, but there is no choice of $d_i$ such that more than 7 are satisfied.
	And if the clause is not satisfied (all 3 of $a_i,b_i,c_i$ are false), then any choice of $d_i$ will satisfy at most six of these clauses.
	
	Then $\forallexistSAT$ for $\mathcal{C}$ reduces to \minmaxSAT{} for $\mathcal{C}'$ and $k=7m$. Since if for all $x$, there exists $y$ such that $\mathcal{C}(x,y)$ is true, then for all $x$ there exists $y,z$ such that $k=7m$ clauses of $\mathcal{C}'$ are satisfied.
	And if there exists $x$ such that for all $y$ $\mathcal{C}(x,y)$ is unsatisfied, then there exists $x$ such that for all $y,z$, strictly less than $7m$ clauses of $\mathcal{C'}$ are satisfied.
\end{proof}

An immediate consequence of Lemma~\ref{lem:minmaxsat} is that the complement of \minmaxSAT, which we call $\neg$\minmaxSAT, is $\Sigmatwo$-complete.
\begin{definition} [$\neg$\minmaxSAT]
	\textbf{Input:} Boolean formula $\mathcal{C}(x,y)$ in conjunctive normal form with at most 2 literals per clause; and an integer $k$.
	
	\textbf{Problem:} Decide if there exists $x$, such that for all $y$,  $\mathcal{C}(x,y)$ violates at least $k$ clauses.
\end{definition}
Note that an equivalent definition of \minmaxSAT{} is to decide if $C(x,y)$ satisfies at least $k$ clauses.

\subsection{Gadgets} \label{gadgets}

In \cite{Klassen2019} two-local Hamiltonian gadget terms were constructed such that, when these terms are added to a Hamiltonian $H$, a local change of basis can make the total Hamiltonian stoquastic if and only if the original Hamiltonian $H$ can be made stoquastic by a restricted type of basis of change.

In the following, we use $W$ to denote the single-qubit Hadamard matrix and for $ x \in \{0,1\}^n$ we define  
\[W(x)= \bigotimes_{i=1}^n W_i^{x_i}.\]

\begin{lemma}[Lemma 5.1 of \cite{Klassen2019}]
    \label{lem:oldgadgets}
    Let H be a Hamiltonian on $n$ qubits. For each qubit $u \in \{1,...,n\}$, add three ancilla qubits $a_u, b_u, c_u$ and define the two-local gadget Hamiltonian 
   \begin{align*}
	    G_1=\sum_{u=1}^{n}-&(X_{c_u}+Z_{c_u})-(X_uX_{a_u}+Y_uY_{a_u}+Z_uZ_{a_u})\\&-(3X_{a_u}X_{b_u}+Y_{a_u}Y_{b_u}+2Z_{a_u}Z_{b_u})\\&-(X_{b_u}X_{c_u}+Y_{b_u}Y_{c_u}+Z_{b_u}Z_{c_u})
	\end{align*}
	
	Then the following are equivalent:
	\begin{enumerate}
	    \item there exists a product of single qubit unitaries $U$ such that $U(H\otimes I +G_1)U^{\dagger}$ is a globally stoquastic Hamiltonian.
	    
	    \item there exists $x \in \{0,1\}^n$ such that $W(x)^{\dagger} H W(x)$ is globally stoquastic, where $W(x)=\left(\bigotimes_{i=1}^n W_i^{x_i}\right)$. 
	\end{enumerate}
\end{lemma}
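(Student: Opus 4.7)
The plan is to establish both implications of the equivalence separately. Direction (2)$\Rightarrow$(1) is an explicit construction; the substantive content lies in the converse (1)$\Rightarrow$(2), which is Lemma~5.1 of \cite{Klassen2019}.

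For (2)$\Rightarrow$(1), given $x\in\{0,1\}^n$ with $W(x)^\dagger H W(x)$ globally stoquastic, I would exhibit the product unitary
\[
U = W(x)\otimes\bigotimes_{u=1}^n W_{a_u}^{x_u}\otimes W_{b_u}^{x_u}\otimes W_{c_u}^{x_u},
\]
and verify term by term that $U(H\otimes I+G_1)U^\dagger$ is globally stoquastic. Using $WXW=Z$, $WZW=X$, $WYW=-Y$, each isotropic Heisenberg term $-(XX+YY+ZZ)$ is invariant under $W\otimes W$; the anisotropic middle term $-(3XX+YY+2ZZ)$ is mapped to $-(2XX+YY+3ZZ)$, which one can directly check has only non-positive off-diagonal matrix elements in the computational basis; and the single-qubit term $-(X_{c_u}+Z_{c_u})$ is invariant under $W$. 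The non-gadget part becomes $W(x)^\dagger H W(x)\otimes I$ using $W=W^\dagger$, stoquastic by hypothesis.

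The hard direction (1)$\Rightarrow$(2) is where the real work is. Given a product unitary $U=\bigotimes_i U_i$ with $U(H\otimes I+G_1)U^\dagger$ globally stoquastic, the strategy is to use $G_1$ to pin each single-qubit unitary $U_u$ acting on an original qubit down to the form $W^{x_u}$, modulo local transformations that preserve stoquasticity of $H$ under conjugation. The key observation is that the matrix elements of $H\otimes I$ vanish between computational-basis states that differ only on ancilla qubits within one slice $\{u,a_u,b_u,c_u\}$, so each gadget slice $G_1^{(u)}$ must be made globally stoquastic independently by $U_u\otimes U_{a_u}\otimes U_{b_u}\otimes U_{c_u}$.

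Within a slice I would argue inward from $c_u$: the single-qubit term $-(X_{c_u}+Z_{c_u})$ restricts $U_{c_u}$ to those unitaries for which the real off-diagonal of $U_{c_u}(X+Z)U_{c_u}^\dagger$ is non-positive, a one-parameter family; imposing stoquasticity of the isotropic Heisenberg term $-(X_{b_u}X_{c_u}+Y_{b_u}Y_{c_u}+Z_{b_u}Z_{c_u})$ further constrains the pair $(U_{b_u},U_{c_u})$; the asymmetric coefficients $(3,1,2)$ in $-(3X_{a_u}X_{b_u}+Y_{a_u}Y_{b_u}+2Z_{a_u}Z_{b_u})$ break the residual continuous symmetry that an isotropic interaction would allow, discretising the admissible pair $(U_{a_u},U_{b_u})$; finally the isotropic Heisenberg term between $u$ and $a_u$ propagates this discrete restriction to $U_u$. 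Reading $x_u\in\{0,1\}$ from whether $U_u$ contains a Hadamard factor yields the bit string $x$ of claim (2). The main obstacle is the detailed classification of local basis changes that preserve stoquasticity of the specific Pauli sums appearing in $G_1$: the coefficients $(3,1,2)$ are crucial because they are precisely what collapses the admissible unitaries from a continuous family down to the finite set $\{I,W\}$ modulo stoquasticity-preserving local Paulis.
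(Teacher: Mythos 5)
This lemma is imported by the paper verbatim from Klassen et al.\ (it is literally labelled ``Lemma 5.1 of \cite{Klassen2019}'') and the paper gives no proof of it, so there is no in-paper argument to compare against; I can only assess your reconstruction on its own terms. Your (2)$\Rightarrow$(1) direction is complete and correct: conjugating the ancillas $a_u,b_u,c_u$ by $W^{x_u}$ along with qubit $u$ does leave the isotropic Heisenberg terms and $-(X_{c_u}+Z_{c_u})$ invariant, and $-(2XX+YY+3ZZ)$ is indeed stoquastic (off-diagonal entries $-(2-1)=-1$ and $-(2+1)=-3$), so this half stands.

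The (1)$\Rightarrow$(2) direction, however, is an outline rather than a proof, and it has two concrete gaps. First, the reduction to ``each gadget slice must be made stoquastic independently'' is not justified: while $U(H\otimes I)U^{\dagger}$ acts trivially on the ancillas and therefore cannot cancel off-diagonal entries of $UG_1U^{\dagger}$ in which an ancilla flips, the conjugated slice term $U(-(X_uX_{a_u}+Y_uY_{a_u}+Z_uZ_{a_u}))U^{\dagger}$ generically also produces off-diagonal entries in which \emph{only} qubit $u$ flips (e.g.\ a $Z_uZ_{a_u}$ component conjugated into $X_uZ_{a_u}$), and those entries \emph{can} interfere with entries coming from $UHU^{\dagger}$; any correct proof must handle this interaction rather than treat the slices as decoupled. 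Second, the core step --- that imposing stoquasticity of $-(X+Z)$, of the isotropic terms, and of the $(3,1,2)$-weighted term successively ``discretises'' $(U_{c_u},U_{b_u},U_{a_u},U_u)$ down to $\{I,W\}$ modulo stoquasticity-preserving factors --- is exactly the content of the lemma and is asserted, not established. You would need the explicit classification of single-qubit conjugations preserving stoquasticity of a two-local term $\alpha XX+\beta YY+\gamma ZZ$ with prescribed coefficients (the technical machinery of \cite{Klassen2019}), and without it the claim that the admissible set collapses to a finite one is unsupported. As written, your proposal correctly identifies the strategy of the cited proof but does not constitute a proof of the hard direction.
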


Lemma~\ref{lem:oldgadgets} restricts any sign-curing product unitary to act either as $W$ or $I$ on each qubit.
Now we present a small extension to these gadgets, in the form of an additional gadget Hamiltonian $G_2$, to further restrict the type of possible basis changes on certain qubits. 
For a Hamiltonian acting on $n+l$ qubits, these gadget terms force any product sign curing unitary to act as either $W$ or $I$ on $n$ of the qubits, and only as $I$ on the other $l$ qubits.

\begin{lemma}[Extension of Lemma~\ref{lem:oldgadgets}]
	\label{lem:gadgets}
	Let $H$ be a Hamiltonian on $n+l$ qubits. First add an ancilla qubit $d_j$ for each $j\in \{1,...,l\}$  and define the 2-local gadget Hamiltonian 
	
	\begin{equation}
	    G_2=\sum_{j=1}^{l} -X_{n+j}X_{d_j}+Z_{n+j}Z_{d_j}
	\end{equation}
	
	 Then add three more ancilla qubits $a_u,b_u,c_u$ for each of the $n+2l$ qubits.
	Defining the total Hamiltonian $H_{\rm Had}$ on $4n+8l$ qubits as 
	\[H_{\rm Had} = (H\otimes I +G_2) \otimes I + G_1\]
	where $G_1$ is as defined in Lemma \ref{lem:oldgadgets}, the following are equivalent:
	\begin{enumerate}
		\item there exists a product of single qubit unitaries $U$ such that $U H_{\rm Had} U^{\dagger}$ is globally stoquastic.
		\item there exists $x \in \{0,1\}^{n}$ such that $W(x0^l)^{\dagger} H W(x0^l)$ is globally stoquastic, where $W(x0^l)=\left(\bigotimes_{i=1}^n W_i^{x_i}\right)\otimes \left(\bigotimes_{i=n+1}^{n+l} I_i\right)$. 
	\end{enumerate}  
\end{lemma}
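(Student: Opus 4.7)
The plan is to reduce the claim to an application of Lemma~\ref{lem:oldgadgets} applied to the intermediate Hamiltonian $H'=H\otimes I+G_2$ on $n+2l$ qubits, and then argue that the presence of $G_2$ forces the Hadamard pattern to be trivial on the $2l$ qubits $n+j$ and $d_j$.

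First I would rewrite $H_{\rm Had}=H'\otimes I+G_1$ and invoke Lemma~\ref{lem:oldgadgets} directly on $H'$. This immediately identifies condition 1 with the existence of $x'\in\{0,1\}^{n+2l}$ such that $W(x')^{\dagger}H'W(x')$ is globally stoquastic. The problem therefore reduces to analysing which bitstrings $x'$ can witness this stoquasticity.

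Second, I would argue that any such $x'$ must satisfy $x'_{n+j}=x'_{d_j}=0$ for all $j\in\{1,\dots,l\}$. The key observation is that $H\otimes I$ has trivial support on every ancilla $d_j$, so the only terms in $H'$ that produce an off-diagonal matrix element between two computational basis vectors differing in $d_j$ are the summands of $G_2$. For each $j$, set $s=x'_{n+j}$ and $t=x'_{d_j}$ and evaluate $(W^{s}\otimes W^{t})(-XX+ZZ)(W^{s}\otimes W^{t})$. A four-case check shows this equals $-XX+ZZ,\ -XZ+ZX,\ -ZX+XZ,\ -ZZ+XX$ in the four cases; only $(s,t)=(0,0)$ yields a stoquastic operator, the others having strictly positive off-diagonal matrix entries flipping the $d_j$ qubit. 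Since no other term in $H'$ can contribute to such matrix entries, they cannot be cancelled and hence $s=t=0$ is forced.

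Third, once $x'$ has zeros in the $n+j$ and $d_j$ coordinates, conjugation by $W(x')$ leaves $G_2$ invariant, contributing the diagonal $\sum_j Z_{n+j}Z_{d_j}$ (harmless) and the stoquastic $\sum_j -X_{n+j}X_{d_j}$ (the unique source of non-zero matrix elements flipping some $d_j$, all of which are negative). All other off-diagonal matrix elements of $W(x')^{\dagger}H'W(x')$ involve only the first $n+l$ qubits and come entirely from $W(x')^{\dagger}(H\otimes I)W(x')$. Writing $x=x'|_{\{1,\dots,n\}}$, the restriction of $W(x')$ to the first $n+l$ qubits equals $W(x0^{l})$, so global stoquasticity of $W(x')^{\dagger}H'W(x')$ is equivalent to global stoquasticity of $W(x0^{l})^{\dagger}HW(x0^{l})$, which is exactly condition 2. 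The main obstacle is the second step: one must verify that no interaction between $H$ and $G_2$ can conspire to cancel the positive matrix entries produced by a non-trivial Hadamard on a $G_2$ summand; this obstacle is resolved by the locality observation that $H$ acts trivially on every $d_j$, isolating the relevant matrix entries to $G_2$ alone.
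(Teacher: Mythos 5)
Your proof is correct and follows essentially the same route as the paper's: invoke Lemma~\ref{lem:oldgadgets} on $H'=H\otimes I+G_2$, then use the fact that $H$ acts trivially on the $d_j$ qubits to show that any nontrivial Hadamard on a $G_2$ summand leaves an uncancellable positive off-diagonal entry, forcing the Hadamard pattern to be trivial on those $2l$ qubits. Your explicit four-case conjugation of $-XX+ZZ$ is just a more detailed version of the paper's assertion that such an entry exists, and your final equivalence step matches the paper's argument that the $d_j$-flipping entries of $G_2$ and the off-diagonal entries of $H\otimes I$ live in disjoint matrix positions.
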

\begin{proof}

	We first prove that 2 $\Rightarrow$ 1. Let $H'= H \otimes I +G_2$. 
	If there exists $x\in \{0,1\}^{n}$ such that $W(x0^l)HW(x0^l)^{\dagger}$ is globally stoquastic, then \[W(x0^{2l})H'W(x0^{2l})^{\dagger}=\left(W(x0^l)HW(x0^l)^{\dagger}\right)\otimes I +G_2\]
	is also globally stoquastic, since the off-diagonal matrix entries of $G_2$ are all non-positive.
	Therefore by Lemma~\ref{lem:oldgadgets}, there exists a unitary $U$, which is a product of single qubit unitaries, such that $UH_{\text{Had}}U^{\dagger}$ is globally stoquastic.
	
	For the converse direction, Lemma~\ref{lem:oldgadgets} implies that if 1. holds then there exists $x\in \{0,1\}^n$ and $y,z \in \{0,1\}^l$ such that $W(xyz) H' W(xyz)^{\dagger}$ is globally stoquastic, where 
 \[W(xyz)= \left(\bigotimes_{i= 1}^{n} W_{i}^{x_i} \right)\otimes\left(\bigotimes_{j=1}^{l} W_{n+j}^{y_{j}} \otimes W_{d_j}^{z_{j}} \right)\]
Suppose for a contradiction that $y_j$ or $z_j$ is 1 for some $j$. Then there is a positive off-diagonal matrix entry $\bra{u}W(xyz) G_2 W(xyz)^{\dagger} \ket{v} >0$ where $u,v \in \{0,1\}^{n+2l}$ differ at location $d_j$. This cannot be cancelled out by $W(xyz)(H\otimes I)W(xyz)^{\dagger}$ (since this acts trivially on qubit $d_j$), implying that $W(xyz) H' W(xyz)^{\dagger}$ is not globally stoquastic. But this is a contradiction and we can conclude that $y=z =0^l$.
 
 Finally if $W(x0^{2l})H'W(x0^{2l})^{\dagger}=W(x0^l)HW(x0^l)^{\dagger}\otimes I +G_2$ is globally stoquastic then $W(x0^l)HW(x0^l)^{\dagger}$ is globally stoquastic. This is because the off-diagonal matrix entries of $G_2$, $\bra{x}G_2\ket{y}$, are non-zero only when $x_{d_j}\neq y_{d_j}$ for some $j$, and so these matrix entries cannot cancel out any of the off-diagonal matrix entries of $W(x0^l)HW(x0^l)^{\dagger}\otimes I$ which acts trivially on qubits $d_j$.
	

\end{proof}
\subsection{Hardness construction}
\label{sec:Sigmatwohardness}

For a 2-SAT formula $\mathcal{C}(a,b)$ in conjunctive normal form, with $a \in \{0,1\}^n$  and $b \in \{0,1\}^l$, 
\[\mathcal{C}=\bigwedge_{k=1}^m \left(c_{k,1} \vee c_{k,2}\right),\]
 we define a corresponding Hamiltonian $H_{\mathcal{C}}$ on $n+l$ qubits:	
 \[ H_{\mathcal{C}}= \sum_{k=1}^m P(c_{k,1}) \otimes P(c_{k,2}) \]
where  
\[P(c)= \begin{cases}
(X_i+I) & \text{ if } c = a_i \\
(Z_i +I) & \text{ if } c = \neg a_i \\
\ket{0}\bra{0}_{n+i} & \text{ if } c = b_i \\
\ket{1}\bra{1}_{n+i} & \text{ if } c = \neg b_i \\
\end{cases}\]
For example, the formula
\[\mathcal{C}_{\text{example}}= (a_1 \vee b_2) \bigwedge (\neg a_2 \vee b_1)\]
corresponds to the Hamiltonian
\[H_{\mathcal{C}_{\text{example}}}=X_1 \otimes\ket{0}\bra{0}_4 +Z_2\otimes \ket{0}\bra{0}_3. \]

We now consider all possible basis changes by Hadamards on the first $n$ qubits and introduce the shorthand  $H_{\mathcal{C}}(x):=W(x0^l)H_\mathcal{C}W(x0^l)^{\dagger}$. For all $x \in \{0,1\}^n$, $H_{\mathcal{C}}(x)$ has the following properties:
\begin{enumerate}
	\item All off-diagonal elements of $H_{\mathcal{C}}(x)$ are non-negative.
	\item The diagonal matrix entries satisfy:
	\[
	\bra{z}\bra{y} H_{\mathcal{C}}(x)\ket{z}\ket{y} \ge \bra{1^n} \bra{y}H_{\mathcal{C}}(x)\ket{1^n}\ket{y}  
	\]
	where $x,z \in \{0,1\}^n$ and $y \in \{0,1\}^l$.
	\item $\bra{1^n} \bra{y} H_{\mathcal{C}}(x)\ket{1^n}\ket{y}$
	is the number of clauses of $\mathcal{C}$ violated by $(x,y)$.
\end{enumerate}
Properties 1. and 2. can be easily checked. 
To see point 3., observe that 
\begin{align*}
\bra{1^n}\bra{y}W(x0^l)P(c) & W(x0^l)^{\dagger}\ket{1^n}\ket{y}\\&=\begin{cases}
0 & \text{ if }c(x,y) \text{ is true} \\
1 & \text{ if }c(x,y) \text{ is false}
\end{cases}
\end{align*}
and so 
\begin{align*}\bra{1^n}\bra{y}W(x0^l)P(c_1)P(&c_2) W(x0^l)^{\dagger}\ket{1^n}\ket{y}\\
&=\begin{cases}
0 & \text{ if }(c_1\vee c_2) \text{ is true} \\
1 & \text{ if }(c_1 \vee c_2) \text{ is false}
\end{cases}
\end{align*}
Therefore, $\bra{1^n}\bra{y}W(x0^l)H_{\mathcal{C}} W(x0^l)^{\dagger}\ket{1^n}\ket{y}$ is equal to the number of clauses in $\mathcal{C}$ that are not satisfied by $(x,y)$, as claimed.

We are now ready to prove Theorem~\ref{thm:Sigmatwo}.
\begin{proof}[Proof. (of Theorem~\ref{thm:Sigmatwo})]
We first prove containment in $\Sigmatwo{}$ under the additional promise of the theorem. Let $y\in\{0,1\}^{2n}$ be the concatenation of two bit strings $y_1,y_2\in \{0,1\}^n$ and let $x$ be a description of a unitary $U_x$ that allows for efficient computation of the matrix entries of $U_xHU_x^{\dagger}$. Then there exists a polynomial size verifier circuit $V$ that accepts on input $x,y$ if $\bra{y_1}U_xHU_x^{\dagger}\ket{y_2} \le 0$ or $y_1=y_2 $. Then $\mathcal{U}$-\GlobalStoq{} is a YES instance iff $\exists x, \: \forall y \: V(x,y)$ accepts, and hence is in $\Sigmatwo$.
	
	

	To prove hardness we consider Hamiltonians of the form 	
	\begin{equation}H=(X \otimes \left(kI-H_{\mathcal{C}}\right)\otimes I+G_2)\otimes I+G_1
	\label{eq:thm4proof}\end{equation}
	where $H_\mathcal{C}$ is as defined above, and $G_1$, $G_2$ are the gadget Hamiltonians of Lemma~\ref{lem:gadgets}. These are chosen so that there exists a product unitary $U$ such that $UHU^{\dagger}$ is globally stoquastic if and only if there exists $x \in \{0,1\}^n$ such that 
	\begin{align*}W(0x & 0^l)\left(X \otimes (kI-H_{\mathcal{C}}) \right) W(0x0^l)^{\dagger} \\
	& = X \otimes \left[kI -W(x0^l) H_{\mathcal{C}}W(x0^l)^{\dagger}\right] \end{align*}
	is globally stoquastic.
	
	This happens if and only if all the off-diagonal terms of  $H_{\mathcal{C}}(x)=W(x0^l)H_{\mathcal{C}}W(x0^l)^{\dagger}$ are non-negative and all the diagonal matrix entries are greater than or equal to $k$.
	
	As observed above, all the off-diagonal elements of $H_{\mathcal{C}}(x)$ are non-negative and the smallest diagonal matrix entry is of the form $\bra{1^n} \bra{y} H_{\mathcal{C}}(x)\ket{1^n}\ket{y}$ for some $y$. Furthermore $\bra{1^n} \bra{y} H_{\mathcal{C}}(x)\ket{1^n}\ket{y}$ is equal to the number of unsatisfied clauses of $\mathcal{C}$.
	Therefore $\mathcal{U}$-\GlobalStoq{} is a YES instance if and only if there exists $x$ such that for all $y$ at least $k$ clauses of $\mathcal{C}$ are unsatisfied.
	That is, $H$ is a YES instance of $\mathcal{U}$-\GlobalStoq{} if and only if $\mathcal{C}$ is a YES instance of $\neg$\minmaxSAT. $\neg$\minmaxSAT is $\Sigmatwo$-complete and so, for the class of Hamiltonians of the form \eqref{eq:thm4proof}, $\mathcal{U}$-\GlobalStoq{} is $\Sigmatwo$-complete.

\end{proof}

\section{Sign-curing by Clifford transformations} \label{sec:Clifford}

In the previous section we considered the problem of sign-curing a Hamiltonian by product unitaries of the form $\bigotimes_{i=1}^n U_i$ where $U_i \in  {\sf U(2)}$. However, these are not the only computationally efficient transformations we could consider. In this section we go beyond these single-qubit transformations and ask which $k$-local Hamiltonians can be mapped onto stoquastic Hamiltonians by means of Clifford transformations. It is apriori not clear that the class of sign-curable Hamiltonians can be enlarged using these transformations in contrast to single qubit unitaries. Here we show that this is indeed the case and give an explicit example of a class of 1D Hamiltonians which can be sign-cured with Clifford unitaries but cannot be sign-cured using product unitaries of the form $\bigotimes_{i=1}^n U_i$ where $U_i \in  {\sf U(2)}$.

We start with a reminder about the definition of the Clifford group.

\begin{definition}[Clifford group] 
	\begin{equation*}
	{\sf C}_n=\{U\in {\sf U(2^n)}\ |\ UPU^{\dagger}\in {\sf P}_n \ \forall P\in {\sf P}_n\}
	\end{equation*}
	where ${\sf P}_n$ is the Pauli group on n qubits, that is, ${\sf P}_n =\{I,X,Y,Z\}^{\otimes n}\times \{\pm 1, \pm i\}$
	\label{def:cliff}
\end{definition}

A simple example of a Hamiltonian which can be sign-cured with Clifford transformations is a Hamiltonian which is a sum of commuting Pauli terms $H=\sum_i J_i P_i$ with Paulis $\forall i,j, \;[P_i,P_j]=0$ such as a stabilizer Hamiltonians. $H$ is then obviously computationally-stoquastic as it can be transformed by the adjoint action of an element of the Clifford group (i.e. a Clifford transformation $C$) with $C^{\dagger} P_i C=\tilde{P}_i(Z)$ where $\tilde{P}_i(Z)$ is a tensor product of $Z$ gates, so the transformed Hamiltonian $\tilde{H}=\sum_i J_i \tilde{P}_i(Z)$ is diagonal and thus stoquastic. The existence of a Clifford transformation which performs this mapping is attributed to the fact that Clifford transformations conserve the product relations among the Paulis. Thus, modulo $\pm i$ and $\pm 1$ signs, we can conclude that there is some Clifford circuits performing the transformations as long as the product relations among the Paulis $\tilde{P}_i(Z)$ are those of $P_i$. In the above example one can check explicitly that these product relations are indeed conserved. 

More generally, for a Hamiltonian $H=\sum_i J_i P_i$, one specifies a set $S_{\rm indep}$ of independent Paulis among the $P_i$ such that other Pauli terms in $H$ are obtained by taking products of these independent elements (modulo $\pm 1, \pm i$ prefactors). If we have a Hermiticity-preserving injective map $g$ such that each
$P \in {\cal S}_{\rm indep}$ is mapped to a $P'$, i.e. $g(P)=P'$ so that $g(P_1 P_2)=g(P_1) g(P_2)$ for $P_1, P_2 \in {\cal S}_{\rm indep}$, then there is always a realization for $g$ which is a Clifford transformation. Hence specifying the action of such map acting on the independent set $S_{\rm indep}$ means that a Clifford transformation will exist (and can be found explicitly as a symplectic transformation \cite{KS:sym}).
\usetikzlibrary{arrows}
 \begin{figure*}
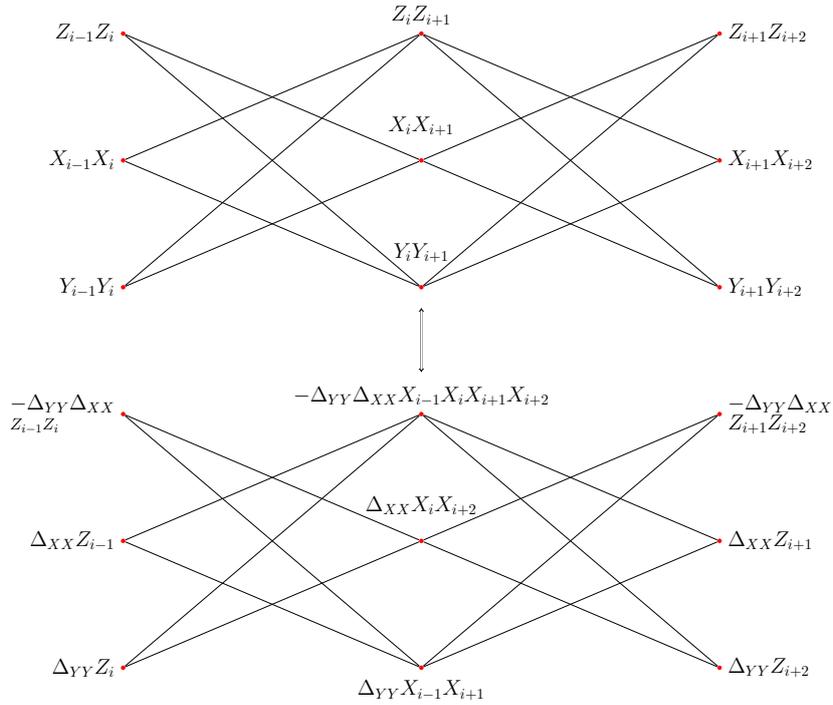

 	\centering
 \includestandalone[width=11cm]{graph}
		\caption{Top: Graph summarising the (anti-) commutation relations between the Pauli terms in the XYZ Heisenberg model. An edge between two vertices representing Paulis indicates anticommutation between them while the absence of an edge indicates that they commute. Bottom: Representation of these terms when transformed under a Clifford transformation which preserves the commutation structure. $\Delta_{XX}$, $\Delta_{YY}$ $=\pm1$ are chosen for each edge $(i,i+1)$ of $H$ independently.}
	\label{fig:comm}
\end{figure*}
Assume that $\tilde{H}=C^{\dagger} H C$ with Clifford $C$ and let $\tilde{H}$ be globally stoquastic. We observe that if we wish to estimate thermal or ground state properties of $H$ via a path integral quantum Monte Carlo method, see Appendix \ref{sec:signProblem}, then we can simply use $\tilde{H}$ in this simulation: the Pauli terms in Hamiltonian $\tilde{H}$ may have high-weight, but this does not make the Monte Carlo method inefficient (although it may require different Metropolis update rules), since the number of such Pauli terms is the same as in $H$ and thus matrix elements of $\tilde{H}$ can still be determined efficiently.

We now will provide an example of a class of 1D disordered Heisenberg Hamiltonians for which one can prove that sign-curing using single-qubit unitary transformations is not possible, while Clifford transformations do sign-cure the Hamiltonians in this class.

Consider the disordered Heisenberg XYZ model on an open boundary 1D chain of $n$ qubits with arbitrary coupling coefficients:
\begin{align}
H=\sum_{i=1}^{n-1}\alpha_{i \ i+1}^{XX}X_iX_{i+1} &+\alpha_{i\ i+1}^{YY}Y_iY_{i+1} \nonumber\\ &+\alpha_{i\ i+1}^{ZZ}Z_iZ_{i+1}.
\label{eq:heis}
\end{align}

In what follows, we will refer to pairs $(i,i+1)$ as edges of $H$. It is known that a 1D translationally-invariant Heisenberg XYZ model has no sign problem \cite{Suzuki} and we review the argument in Appendix \ref{app:XYZ}. It was shown in \cite{KT:stoq} that if we seek to sign-cure an arbitrary XYZ Heisenberg model by single-qubit (local) basis changes, then single-qubit Clifford transformations suffice. However, not all 1D XYZ models can be sign-cured by single-qubit unitary transformations and we give some explicit examples where such transformations fall short in Appendix \ref{sec:fs}. Given these examples, the following theorem then shows that Clifford transformation are more powerful than single-qubit unitary transformations.

\begin{theorem}\label{clifford2}
	The (open boundary) 1D disordered XYZ Heisenberg model in Eq.~(\ref{eq:heis}) is $4$-termwise stoquastic by a Clifford transformation if, either for all even edges we have
	\begin{equation}
	\forall i,\;\alpha_{2i \ 2i+1}^{XX} \times \alpha_{2i \ 2i+1}^{YY}\times \alpha_{2i \ 2i+1}^{ZZ} \geq 0,  
	\label{eq:cond1} 
	\end{equation}
	or, for all odd edges we have
	\begin{equation}
	\forall i,\; \alpha_{2i-1 \ 2i}^{XX} \times \alpha_{2i-1 \ 2i}^{YY}\times \alpha_{2i-1 \ 2i}^{ZZ} \geq 0.   
	\label{eq:cond2}
	\end{equation}
\end{theorem}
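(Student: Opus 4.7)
The plan is to reduce the two cases in the theorem to each other (the odd-edge case follows from the even-edge case by shifting the chain by one site), and then to construct an explicit Clifford $C$ that targets the even edges (where the product condition is assumed). Concretely, I would decompose $C$ into three ingredients: (i) a global Clifford (built from $\prod_i S_i$ and $\prod_i H_i$) that realizes any permutation of the roles of $(X,Y,Z)$ on every site simultaneously, (ii) single-qubit Pauli sign-flips applied at the endpoints of each even edge to adjust the signs of $(\alpha^{XX},\alpha^{YY},\alpha^{ZZ})$ there, and (iii) a 2-qubit Clifford $U_i$ applied on each \emph{odd} pair of qubits $(2i-1,2i)$. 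Because even edges have pairwise disjoint endpoints ($(2,3),(4,5),\ldots$), the single-qubit Paulis in (ii) can be chosen independently on each such edge without interference.

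For the 2-qubit gate $U_i$ in (iii), I would use a Bell-basis type transformation so that the odd-edge term on $(2i-1,2i)$ becomes a sum of pairwise-commuting diagonal Paulis ($Z_{2i-1}$, $Z_{2i}$, $Z_{2i-1}Z_{2i}$ with appropriate signs), which is automatically stoquastic since it is diagonal in the computational basis. The nontrivial effect of (iii) shows up on the even-edge terms that straddle two adjacent odd pairs: each single-qubit operator on the boundary qubit $2i$ (resp.\ $2i+1$) of a pair is mapped to an at-most-2-local Pauli on $(2i-1,2i)$ (resp.\ $(2i+1,2i+2)$), so the even-edge term $\alpha^{XX}_{2i,2i+1}X_{2i}X_{2i+1}+\alpha^{YY}_{2i,2i+1}Y_{2i}Y_{2i+1}+\alpha^{ZZ}_{2i,2i+1}Z_{2i}Z_{2i+1}$ becomes a sum of at-most-4-local Paulis on qubits $\{2i-1,2i,2i+1,2i+2\}$. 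I would then check that these three images still pairwise commute (as they must, Clifford conjugation preserving commutation) and record the sign with which their triple product equals $-I$, since this sign is the Clifford invariant that must be matched against the coupling signs.

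The heart of the argument is to show that this 4-local even-edge operator is stoquastic (or admits a stoquastic 4-local decomposition) precisely when $\alpha^{XX}_{2i,2i+1}\alpha^{YY}_{2i,2i+1}\alpha^{ZZ}_{2i,2i+1}\ge 0$. I would do a case analysis on the flipping patterns of the three transformed Paulis: each pattern corresponds to a distinct location of off-diagonal matrix entries in the $16$-dimensional block on the four qubits, and the condition for joint non-positivity of these entries can be read off from (a) the signs picked up by conjugation under $U_i\otimes U_{i+1}$ and (b) the signs chosen in steps (i)--(ii). Combining these with the product condition, one checks that the three original coefficients can always be aligned so that every off-diagonal entry that \emph{could} be non-zero is non-positive; the ``$4$-termwise'' part is used to split any residual Pauli whose intrinsic sign depends on a diagonal factor $Z_j$ into two $4$-local stoquastic pieces, each supported on a fixed eigenspace of $Z_j$.

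The main obstacle I expect is precisely this last step: showing that the 4-local transformed even-edge term is stoquastic (or decomposable as such) under the product-of-couplings hypothesis, and not under weaker hypotheses. Concretely, one has to verify that the freedom provided by the triple product $-I$ invariant, the permutation in (i), and the independent sign-flips in (ii) is exactly rich enough to absorb any sign configuration satisfying $\alpha^{XX}\alpha^{YY}\alpha^{ZZ}\ge 0$, while no choice of Clifford works if this product is strictly negative on some even edge. Finally, I would note that boundary effects (an odd total number of qubits, or the last unpaired qubit) only add a single trailing 2-local term that is already stoquastic after (i)--(ii), and that the odd-edge case is handled by the same argument after relabeling sites $i\mapsto i+1$, so that pairs are formed as $(2,3),(4,5),\ldots$ instead of $(1,2),(3,4),\ldots$
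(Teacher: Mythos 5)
Your overall strategy is the same as the paper's: apply a two-qubit Clifford on each odd pair $(2i-1,2i)$ so that the odd-edge terms become diagonal (hence automatically stoquastic regardless of sign), observe that the even-edge terms then become weight-$\le 4$ Pauli strings straddling two pairs, and use the fact that the triple product $X_iX_{i+1}\cdot Y_iY_{i+1}\cdot Z_iZ_{i+1}=-I$ is a Clifford invariant so that two of the three image signs are free and the third is forced, which is exactly where the hypothesis $\alpha^{XX}\alpha^{YY}\alpha^{ZZ}\ge 0$ enters. That part of the sign bookkeeping is correct and matches the paper.

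However, there is a genuine gap in your final step. For a \emph{single} real Pauli term $c\,P$ to be stoquastic, all of its nonzero off-diagonal entries must be non-positive, which for a Pauli string happens only when the off-diagonal part is a pure tensor product of $X$'s (with $c\le 0$); if $P$ contains a residual $Z_j$ factor, or a $Y\otimes Y$ pattern, its off-diagonal entries take both signs and \emph{no} choice of $c\neq 0$ makes it stoquastic. Your proposed fix of splitting such a term over the eigenspaces of $Z_j$ does not work: $-Q\otimes Z_j = -Q\otimes\ket{0}\bra{0}_j + Q\otimes\ket{1}\bra{1}_j$ yields one stoquastic and one anti-stoquastic piece, so you do not obtain a termwise stoquastic decomposition (indeed $-Q\otimes Z_j$ is not even globally stoquastic on its own). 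A generic "Bell-basis type" choice of $U_i$ will produce exactly such images for the boundary operators $X_{2i},Y_{2i}$ (e.g.\ $Z_{2i-1}X_{2i}$ or $Y_{2i-1}Y_{2i}$), so the case analysis you defer is where the proof actually lives. What is needed — and what the paper supplies via its commutation-graph figure and the general fact that any product-relation-preserving map on an independent generating set of Paulis is realized by some Clifford — is an explicit consistent assignment in which every even-edge image is a \emph{pure} $X$-string on at most $4$ qubits. Once that is in place, choosing $\Delta_{XX}=-\mathrm{sign}(\alpha^{XX})$, $\Delta_{YY}=-\mathrm{sign}(\alpha^{YY})$ per even edge and letting $\Delta_{ZZ}=-\Delta_{XX}\Delta_{YY}$ be forced completes the argument, and no splitting is required. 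Your reduction of the odd-edge case to the even-edge case by relabeling, and your observation that the single-qubit sign flips on distinct even edges do not interfere, are both fine.
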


\begin{proof}
	We can see the commutation structure of the XX, YY and ZZ terms in the Heisenberg model in Fig.~\ref{fig:comm}. The XX and YY terms can be chosen as the independent set $S_{\rm indep}$.
	If Eq.~(\ref{eq:cond1}) holds, we map the commuting XX, YY, ZZ terms between qubits $2i-1$ and $2i$ (odd edges of $H$) to terms which are tensor products of Z gates, which all mutually commute as required. Since $X_iX_{i+1} Y_iY_{i+1} =- Z_i Z_{i+1}$, we just need to make sure that we choose the 3 terms with the right sign, so as to preserve all product relations. However, terms which are tensor products of Z gates are diagonal, their signs do not matter for stoquasticity so any consistent choice is valid.
	If instead Eq.~(\ref{eq:cond2}) holds, we would have done the same for the terms between qubits $2i$ and $2i+1$ (even edges of $H$). 
	
	Now consider wlog that Eq.~(\ref{eq:cond1}) holds and we need to map the commuting  XX, YY, ZZ terms between qubits on even edges of $H$ to Pauli terms respecting all product relations. One can easily check that we can map these terms to tensor products of X gates, acting on at most 4 qubits, as shown in Fig.~\ref{fig:comm}. In contrast to the purely diagonal terms mentioned above, we are now constraint in choosing the signs so that these off-diagonal terms have non-positive entries. As a consequence of Eq.~(\ref{eq:cond1}) one can choose $\Delta_{YY}=-{\rm sign}(\alpha^{YY})$, $\Delta_{XX}=-{\rm sign}(\alpha^{XX})$, individually for each edge between qubits $2i$ and $2i+1$, so all X-like terms are stoquastic. Note that in case a term which is mapped to a X-like term occurs at the boundary we replace its action on non-existing qubits by $I$.
\end{proof}

{\em Remark}: For a disordered Heisenberg chain with periodic boundary this construction would not work as the XX and YY terms do not form an independent set.

\begin{acknowledgments}
	We thank Itay Hen for interesting discussions. BMT and JK acknowledge funding from ERC grant EQEC No. 682726. MI acknowledges support by the DFG (CRC183, EI 519/14-1) and EU
    FET Flagship project PASQuanS. MM acknowledges support by the NSF under Grant No. CCF-1954960 and by IARPA and DARPA via the U.S. Army Research Office contract  W911NF-17-C-0050.

\end{acknowledgments}

\bibliographystyle{plainnat}
\bibliography{apssamp2.1}

\providecommand{\noopsort}[1]{}\providecommand{\singleletter}[1]{#1}%
\begin{thebibliography}{29}
\providecommand{\natexlab}[1]{#1}
\providecommand{\url}[1]{\texttt{#1}}
\expandafter\ifx\csname urlstyle\endcsname\relax
  \providecommand{\doi}[1]{doi: #1}\else
  \providecommand{\doi}{doi: \begingroup \urlstyle{rm}\Url}\fi

\bibitem[Aarts(2009)]{Aarts}
Gert Aarts.
\newblock Can stochastic quantization evade the sign problem? {T}he
  relativistic {B}ose gas at finite chemical potential.
\newblock \emph{Phys. Rev. Lett.}, 102:\penalty0 131601, Apr 2009.
\newblock \doi{10.1103/PhysRevLett.102.131601}.
\newblock URL \url{https://link.aps.org/doi/10.1103/PhysRevLett.102.131601}.

\bibitem[Aharonov and Grilo(2019)]{AG:NP}
Dorit Aharonov and Alex~Bredariol Grilo.
\newblock Stoquastic {PCP} vs. randomness.
\newblock In \emph{2019 IEEE 60th Annual Symposium on Foundations of Computer
  Science (FOCS)}, pages 1000--1023. IEEE, 2019.
\newblock \doi{10.1109/FOCS.2019.00065}.

\bibitem[Aharonov et~al.(2020)Aharonov, Grilo, and Liu]{AGL}
Dorit Aharonov, Alex~B. Grilo, and Yupan Liu.
\newblock {StoqMA} vs. {MA}: the power of error reduction, 2020.
\newblock URL \url{https://doi.org/10.48550/arXiv.2010.02835}.

\bibitem[Barahona()]{Barahona_1982}
F~Barahona.
\newblock On the computational complexity of {I}sing spin glass models.
\newblock \emph{Journal of Physics A: Mathematical and General}, 15\penalty0
  (10):\penalty0 3241--3253.
\newblock \doi{10.1088/0305-4470/15/10/028}.

\bibitem[Bravyi(2019)]{Bravyi_private}
Sergey Bravyi.
\newblock {P}rivate communication, 2019.

\bibitem[Bravyi and Gosset(2017)]{BG:poly}
Sergey Bravyi and David Gosset.
\newblock Polynomial-time classical cimulation of quantum ferromagnets.
\newblock \emph{Phys. Rev. Lett.}, 119:\penalty0 100503, Sep 2017.
\newblock \doi{10.1103/PhysRevLett.119.100503}.

\bibitem[Bravyi and Terhal()]{BravyiTerhalMA}
Sergey Bravyi and Barbara Terhal.
\newblock Complexity of stoquastic frustration-free {H}amiltonians.
\newblock \emph{SIAM J. Comput. Vol. 39, No 4, p. 1462 (2009)}.
\newblock \doi{10.1137/08072689X}.

\bibitem[Bravyi et~al.()Bravyi, DiVincenzo, Oliveira, and Terhal]{BDT:stoq}
Sergey Bravyi, David~P. DiVincenzo, Roberto~I. Oliveira, and Barbara~M. Terhal.
\newblock The complexity of ctoquastic local {H}amiltonian problems.
\newblock \emph{Quant. Inf. Comp. Vol.8, No.5, pp. 0361-0385 (2008)}.
\newblock \doi{10.5555/2011772.2011773}.

\bibitem[Bravyi et~al.(2006-11-02)Bravyi, Bessen, and Terhal]{BBT:stoq}
Sergey Bravyi, Arvid~J. Bessen, and Barbara~M. Terhal.
\newblock Merlin-{A}rthur {G}ames and {S}toquastic {C}omplexity.
\newblock 2006-11-02.
\newblock URL \url{https://arxiv.org/abs/quant-ph/0611021v2}.

\bibitem[Chandrasekharan and Wiese(1999)]{Chandrasekharan}
Shailesh Chandrasekharan and Uwe-Jens Wiese.
\newblock Meron-cluster solution of fermion sign problems.
\newblock \emph{Phys. Rev. Lett.}, 83:\penalty0 3116--3119, Oct 1999.
\newblock \doi{10.1103/PhysRevLett.83.3116}.
\newblock URL \url{https://link.aps.org/doi/10.1103/PhysRevLett.83.3116}.

\bibitem[Crosson and Harrow(2021)]{Crosson2018}
Elizabeth Crosson and Aram~W. Harrow.
\newblock Rapid mixing of path integral {M}onte {C}arlo for 1{D} stoquastic
  {H}amiltonians.
\newblock \emph{{Quantum}}, 5:\penalty0 395, February 2021.
\newblock ISSN 2521-327X.
\newblock \doi{10.22331/q-2021-02-11-395}.
\newblock URL \url{https://doi.org/10.22331/q-2021-02-11-395}.

\bibitem[Crosson et~al.(2020)Crosson, Albash, Hen, and Young]{CAHY:design}
Elizabeth Crosson, Tameem Albash, Itay Hen, and A.~P. Young.
\newblock De-signing {H}amiltonians for quantum adiabatic optimization.
\newblock \emph{{Quantum}}, 4:\penalty0 334, September 2020.
\newblock ISSN 2521-327X.
\newblock \doi{10.22331/q-2020-09-24-334}.
\newblock URL \url{https://doi.org/10.22331/q-2020-09-24-334}.

\bibitem[et~al(2020)]{Ozfidan2019}
I.~Ozfidan et~al.
\newblock Demonstration of nonstoquastic {H}amiltonian in coupled
  superconducting flux qubits.
\newblock \emph{Phys. Rev. Applied}, 13:\penalty0 034037, Mar 2020.
\newblock \doi{10.1103/PhysRevApplied.13.034037}.

\bibitem[Gily\'{e}n et~al.(2021)Gily\'{e}n, Hastings, and Vazirani]{GV:nosign}
Andr\'{a}s Gily\'{e}n, Matthew~B. Hastings, and Umesh Vazirani.
\newblock \emph{(Sub)Exponential advantage of adiabatic quantum computation
  with no sign problem}, page 1357–1369.
\newblock Association for Computing Machinery, New York, NY, USA, 2021.
\newblock ISBN 9781450380539.
\newblock URL \url{https://doi.org/10.1145/3406325.3451060}.

\bibitem[Gupta and Hen(2020)]{GH:sign}
Lalit Gupta and Itay Hen.
\newblock Elucidating the interplay between non-stoquasticity and the sign
  problem.
\newblock \emph{Advanced Quantum Technologies}, 3\penalty0 (1):\penalty0
  1900108, 2020.
\newblock \doi{10.1002/qute.201900108}.
\newblock URL
  \url{https://onlinelibrary.wiley.com/doi/abs/10.1002/qute.201900108}.

\bibitem[Gupta et~al.(2019)Gupta, Albash, and Hen]{GAH:QMC}
Lalit Gupta, Tameem Albash, and Itay Hen.
\newblock Permutation matrix representation quantum {M}onte {C}arlo, 2019.
\newblock URL \url{https://arxiv.org/abs/1908.03740}.

\bibitem[Hangleiter et~al.(2019)Hangleiter, Roth, Nagaj, and
  Eisert]{Hangleiter2019}
Dominik Hangleiter, Ingo Roth, Daniel Nagaj, and Jens Eisert.
\newblock Easing the {M}onte {C}arlo sign problem.
\newblock 2019.
\newblock URL \url{https://arxiv.org/abs/1906.02309v1}.

\bibitem[Hastings and Freedman(2013)]{HF:obstructions}
M.~B. Hastings and M.~H. Freedman.
\newblock Obstructions to classically simulating the quantum adiabatic
  algorithm, 2013.
\newblock URL \url{https://dl.acm.org/doi/10.5555/2535639.2535647}.

\bibitem[Hastings(2021)]{hastings:stoq-adia}
Matthew~B. Hastings.
\newblock The power of adiabatic quantum computation with no sign problem.
\newblock \emph{{Quantum}}, 5:\penalty0 597, December 2021.
\newblock ISSN 2521-327X.
\newblock \doi{10.22331/q-2021-12-06-597}.
\newblock URL \url{https://doi.org/10.22331/q-2021-12-06-597}.

\bibitem[Huffman and Chandrasekharan(2014)]{Huffman}
Emilie~Fulton Huffman and Shailesh Chandrasekharan.
\newblock Solution to sign problems in half-filled spin-polarized electronic
  systems.
\newblock \emph{Phys. Rev. B}, 89:\penalty0 111101, Mar 2014.
\newblock \doi{10.1103/PhysRevB.89.111101}.
\newblock URL \url{https://link.aps.org/doi/10.1103/PhysRevB.89.111101}.

\bibitem[Klassen and Terhal(2019)]{KT:stoq}
Joel Klassen and Barbara~M. Terhal.
\newblock Two-local qubit {H}amiltonians: when are they stoquastic?
\newblock \emph{{Quantum}}, 3:\penalty0 139, May 2019.
\newblock ISSN 2521-327X.
\newblock \doi{10.22331/q-2019-05-06-139}.

\bibitem[Klassen et~al.(2020)Klassen, Marvian, Piddock, Ioannou, Hen, and
  Terhal]{Klassen2019}
Joel Klassen, Milad Marvian, Stephen Piddock, Marios Ioannou, Itay Hen, and
  Barbara~M. Terhal.
\newblock Hardness and ease of curing the sign problem for two-local qubit
  {H}amiltonians.
\newblock \emph{SIAM Journal on Computing}, 49\penalty0 (6):\penalty0
  1332--1362, 2020.
\newblock \doi{10.1137/19M1287511}.
\newblock URL \url{https://doi.org/10.1137/19M1287511}.

\bibitem[Koenig and Smolin(2014)]{KS:sym}
Robert Koenig and John~A. Smolin.
\newblock How to efficiently select an arbitrary {C}lifford group element.
\newblock \emph{Journal of Mathematical Physics}, 55\penalty0 (12):\penalty0
  122202, 2014.
\newblock \doi{10.1063/1.4903507}.

\bibitem[Marvian et~al.()Marvian, Lidar, and Hen]{Marvian2018}
Milad Marvian, Daniel~A. Lidar, and Itay Hen.
\newblock On the computational complexity of curing the sign problem.
\newblock \emph{Nature Comm. 10, 1571 (2019)}.
\newblock \doi{10.1038/s41467-019-09501-6}.

\bibitem[Sandvik(2010)]{sandvik:lect}
Anders~W. Sandvik.
\newblock Computational studies of quantum spin systems.
\newblock \emph{AIP Conference Proceedings}, 1297\penalty0 (1):\penalty0
  135--338, 2010.
\newblock \doi{10.1063/1.3518900}.

\bibitem[Schaefer and Umans(2002)]{Schaefer2002}
Marcus Schaefer and Christopher Umans.
\newblock Completeness in the polynomial-time hierarchy: a compendium.
\newblock \emph{SIGACT News, 33(3)}, 33(3):\penalty0 32--49, 2002.
\newblock \doi{10.1145/582475.582484}.

\bibitem[Suzuki et~al.(1977)Suzuki, Miyashita, and Kuroda]{Suzuki}
Masuo Suzuki, Seiji Miyashita, and Akira Kuroda.
\newblock {Monte {C}arlo simulation of quantum spin systems. {I}}.
\newblock \emph{Progress of Theoretical Physics}, 58\penalty0 (5):\penalty0
  1377--1387, 11 1977.
\newblock ISSN 0033-068X.
\newblock \doi{10.1143/PTP.58.1377}.

\bibitem[Torlai et~al.(2020)Torlai, Carrasquilla, Fishman, Melko, and
  Fisher]{torlai}
Giacomo Torlai, Juan Carrasquilla, Matthew~T. Fishman, Roger~G. Melko, and
  Matthew P.~A. Fisher.
\newblock Wave-function positivization via automatic differentiation.
\newblock \emph{Phys. Rev. Research}, 2:\penalty0 032060, Sep 2020.
\newblock \doi{10.1103/PhysRevResearch.2.032060}.
\newblock URL \url{https://link.aps.org/doi/10.1103/PhysRevResearch.2.032060}.

\bibitem[Wessel(2013)]{Wessel}
Stefan Wessel.
\newblock Monte {C}arlo simulations of quantum spin models, 2013.
\newblock URL
  \url{https://www.cond-mat.de/events/correl13/manuscripts/wessel.pdf}.

\end{thebibliography}

\appendix


\section{Stoquastic Hamiltonians and the sign problem in the path integral quantum Monte Carlo method}
\label{sec:signProblem}

In this Appendix we prove that sparse Hamiltonians $H$ which are globally stoquastic by an efficiently-computable curing transformation avoid the sign problem in the path integral Monte Carlo method.
As target for the path integral Monte Carlo method we focus on estimating $\langle H \rangle_{\beta}={\rm Tr} H \rho_{\beta}$ where $\rho_{\beta}=\exp(-\beta H)/Z(\beta)$ is the Gibbs state.
We will argue that {\em if} there is Metropolis-algorithm based Markov chain which is rapidly mixing and efficient both in $n$ and $\beta$ then a high-accuracy ground state energy estimate can be obtained. Since estimating the ground state energy of such Hamiltonians with this precision is at least \textbf{StoqMA}-hard \cite{BBT:stoq}, --the class includes frustrated classical Hamiltonians for which determining the ground state energy with this accuracy is \NP{}-complete--, it is clear that the Metropolis method cannot always be rapidly mixing. In fact, as is well-known, the configuration updates in the Metropolis algorithm can be chosen in various ways which can lead to better or worse convergence.
Indeed, even when there is an efficient quantum adiabatic algorithm using a stoquastic Hamiltonian, the Monte Carlo method can fail \cite{HF:obstructions}. 

Hence the Theorem below captures the fact that determining a rapidly-mixing Metropolis Markov chain with a good starting point which has sufficient overlap with the fixed point probability distribution is the bottle-neck in such path integral quantum Monte Carlo methods. Ref.~\cite{Crosson2018} proves that for some 1D stoquastic models one can set up such rapidly-mixing Metropolis Markov chain, although the running time for these algorithms scales as ${\rm poly}(e^{\beta})$ (and not ${\rm poly}(\beta)$).

Our proof takes some inspiration from \cite{sandvik:lect}. An alternative approach for estimating the expectation value of observables based on an estimation of $Z(\beta)={\rm Tr} e^{-\beta H}$ itself at various $\beta$ has been described in \cite{Crosson2018}. 

In Section \ref{sec:sign} we further discuss the manifestation of the sign problem when the Hamiltonian is not stoquastic.

\begin{theorem}[Loosely]
	If there exists an efficient Metropolis Monte Carlo Markov chain for the path integral Monte Carlo estimate of $\langle H\rangle_{\beta}$ which converges efficiently in ${\rm poly}(n)$ and ${\rm poly}(\beta)$ for all $\beta$, then this would provide an efficient algorithm to estimate the ground state energy of $H$ with $1/{\rm poly}(n)$ accuracy.
	\label{thm:sign}
\end{theorem}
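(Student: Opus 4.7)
The plan is to show that, given the hypothesis, choosing $\beta$ of polynomial size already produces a Monte Carlo estimate of $\langle H\rangle_\beta$ which lies within $1/\poly(n)$ of the ground state energy $E_0$. I will start from the identity
\begin{equation*}
\langle H\rangle_\beta - E_0 = \frac{\sum_{i\neq 0}\Delta_i e^{-\beta\Delta_i}}{1+\sum_{i\neq 0}e^{-\beta\Delta_i}},
\end{equation*}
where $\Delta_i := E_i - E_0 \geq 0$ are the excitation energies above the ground state. The denominator is bounded below by $1$, and the quantity is monotonically non-increasing in $\beta$ (equivalently, $\partial_\beta \langle H\rangle_\beta = -\text{Var}_\beta(H) \leq 0$), with limit $0$ as $\beta\to\infty$.

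Next I would quantify this convergence. Using $\Delta_i \leq 2\|H\| \leq \poly(n)$ and $e^{-\beta\Delta_i}\leq e^{-\beta\Delta_{\min}}$ for every $i\neq 0$, the error is bounded by $2^{n+1}\|H\|e^{-\beta\Delta_{\min}}$, where $\Delta_{\min}$ denotes the smallest positive excitation. Whenever $\Delta_{\min}\geq 1/\poly(n)$, setting $\beta = \Theta(n/\Delta_{\min}) = \poly(n)$ drives the error below $1/\poly(n)$. For the relevant hard subclasses -- in particular the frustrated classical Ising Hamiltonians of Theorem~\ref{Barahona}, whose integer spectra give $\Delta_{\min}\geq 1$ -- even $\beta = O(n)$ suffices. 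Invoking the hypothesis, the Metropolis chain then returns an estimate of $\langle H\rangle_\beta$ in total time $\poly(n,\beta) = \poly(n)$, which I would output as the required estimate of $E_0$.

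The hard part will be that the naive error bound above is only useful when $\Delta_{\min}$ is under control; for generic stoquastic Hamiltonians the spectral gap can be exponentially small, in which case $\beta$ would need to be exponentially large and the assumed polynomial scaling in $\beta$ no longer produces a polynomial overall running time. This is exactly why the theorem is stated loosely: its substantive content is that whenever $\Delta_{\min}\geq 1/\poly(n)$ (as holds for the \NP-hard classical subclass and for the standard \textbf{StoqMA}-hard constructions built on them), a uniformly efficient Metropolis estimator of $\langle H\rangle_\beta$ would yield a polynomial-time ground state energy algorithm. Thus no such uniformly efficient chain can exist, and the true bottleneck must lie in the mixing time of the Markov chain rather than in the thermodynamic extrapolation to large $\beta$ -- the punchline the theorem is meant to convey.
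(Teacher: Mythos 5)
There is a genuine gap: your argument only works under a spectral-gap assumption that the theorem does not make, and you end up (incorrectly) restricting the theorem's content to gapped instances. Your bound on the thermodynamic error, $\langle H\rangle_\beta - E_0 \le 2^{n+1}\norm{H}e^{-\beta\Delta_{\min}}$, is exponentially loose and forces you to require $\Delta_{\min}\ge 1/\poly(n)$ before $\beta=\poly(n)$ suffices. The paper's Proposition \ref{prop:free} shows this is never the bottleneck, for \emph{any} $n$-qubit Hamiltonian and with no gap hypothesis: writing $F(\beta)=-\beta^{-1}\log Z=\langle H\rangle_\beta-\beta^{-1}S(\rho_\beta)$ and using $0\le S(\rho_\beta)\le n\ln 2$ together with $e^{-\beta E_0}\le Z\le 2^n e^{-\beta E_0}$ gives $|\langle H\rangle_\beta-E_0|\le 2\beta^{-1}n\ln 2$, so $\beta=\poly(n)$ always drives the error below $1/\poly(n)$. (Equivalently, since $g(\beta)=\ln(Ze^{\beta E_0})$ is convex, non-negative, and equals $n\ln 2$ at $\beta=0$, its slope at $\beta$ satisfies $\langle H\rangle_\beta-E_0=-g'(\beta)\le g(0)/\beta$.) Your term-by-term estimate discards exactly the cancellation that this entropy argument captures: each $\Delta_i e^{-\beta\Delta_i}\le 1/(e\beta)$, but summing $2^n$ such terms naively gives $2^n/(e\beta)$ rather than $O(n/\beta)$. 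So the correct punchline is the opposite of the one you draw in your last paragraph: the extrapolation to large $\beta$ is uniformly cheap, and the hardness must sit entirely in the mixing/sampling step, which is why the theorem is a genuine no-go for a \emph{uniformly} efficient Metropolis chain on all stoquastic Hamiltonians, not just gapped or classical ones.

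A secondary omission: you treat the hypothesis as directly handing you an additive $1/\poly(n)$ estimate of $\langle H\rangle_\beta$, whereas the hypothesis only gives samples from the path distribution $P(\mathbf{x},\beta)$. The paper still has to check that the local-energy estimator $h(\mathbf{x})$ of Eq.~\eqref{eq:local-energy} is $\poly(n)$-bounded on all paths of non-zero probability (Eq.~\eqref{eq:bounds}), so that Chernoff--Hoeffding with $K=\poly(n)$ samples controls the empirical mean, and that the Trotter error can be made $1/\poly(n)$ with $N=\poly(n)$ slices. Given the "loose" statement this is a smaller issue than the gap dependence, but a complete proof needs it.
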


We first provide a small proposition which relies on previous observations in \cite{BG:poly}:

\begin{proposition}
	Assume a probabilistic ${\rm poly}(n)$-time classical algorithm which outputs an estimate $H^{\rm est}_{\beta}$ for $\langle H \rangle_{\beta}$ such that 
	\begin{equation}
	{\rm Prob}(|H^{\rm est}_{\beta}-\langle H \rangle_{\beta} | < \varepsilon) \geq 1-\delta,
	\end{equation}
	for some constant $\delta < 1$ such that the algorithm has running time ${\rm poly}(n,1/\varepsilon,\beta)$. Then there exists a ${\rm poly}(n)$ time algorithm which can estimate the ground state energy $E_0$ of $H$ as $E^{\rm est}_0$ with $1/{\rm poly}(n)$ accuracy.
	\label{prop:free}
\end{proposition}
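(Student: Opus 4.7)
The plan is to reduce ground-state energy estimation to the computation of a free energy, which in turn is computed by integrating $\langle H\rangle_{\beta'}$ over $\beta'\in[0,\beta]$ for a polynomially large $\beta$, using the assumed estimator at polynomially many grid points.

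The conceptual crux is a deterministic inequality: writing $Z(\beta)=e^{-\beta E_0}\sum_i e^{-\beta(E_i-E_0)}$, the remaining sum lies in $[1,2^n]$, so
\begin{equation*}
E_0 - \tfrac{n\log 2}{\beta} \;\leq\; -\tfrac{1}{\beta}\log Z(\beta) \;\leq\; E_0.
\end{equation*}
Taking $\beta=\Theta(n/\varepsilon)$ therefore makes $-\tfrac{1}{\beta}\log Z(\beta)$ a deterministic approximation of $E_0$ to additive error $\varepsilon$, with no assumption on the spectral gap. This bypasses what I view as the main obstacle of the problem: a naive attempt to estimate $E_0$ from $\langle H\rangle_\beta$ alone requires $\beta\gtrsim 1/\Delta$, where $\Delta$ is the spectral gap of $H$, which can in principle be exponentially small in $n$.

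For the computation of $\log Z(\beta)$, I would use the identity $\tfrac{d}{d\beta}\log Z=-\langle H\rangle_\beta$ and the trivial value $\log Z(0)=n\log 2$ to write
\begin{equation*}
-\tfrac{1}{\beta}\log Z(\beta)\;=\;-\tfrac{n\log 2}{\beta}\;+\;\tfrac{1}{\beta}\int_0^\beta \langle H\rangle_{\beta'}\,d\beta'.
\end{equation*}
Since $\|H\|\leq \mathrm{poly}(n)$ and $\bigl|\tfrac{d}{d\beta'}\langle H\rangle_{\beta'}\bigr|=\mathrm{Var}_{\beta'}(H)\leq \|H\|^2$, a Riemann/trapezoidal rule with $N=\mathrm{poly}(n,1/\varepsilon)$ nodes approximates the integral to additive error $\beta\varepsilon/4$. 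At each node I would invoke the assumed estimator with precision $\varepsilon/4$, and boost its confidence from the given constant $1-\delta$ to $1-\delta'/N$ by a median-of-means repetition (paying only a $\log N$ overhead per node); a union bound then keeps the overall failure probability below any desired constant $\delta'$ while the total running time remains $\mathrm{poly}(n,1/\varepsilon,\beta)=\mathrm{poly}(n)$.

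Assembling the pieces, the midpoint estimator
\begin{equation*}
E_0^{\mathrm{est}} \;:=\; -\tfrac{1}{\beta}\widehat{\log Z(\beta)}\;+\;\tfrac{n\log 2}{2\beta}
\end{equation*}
lies within $O(\varepsilon)$ of $E_0$ with the required probability, yielding a $\mathrm{poly}(n)$-time algorithm that estimates $E_0$ to $1/\mathrm{poly}(n)$ accuracy. The only routine verification remaining is that the quadrature and sampling errors stay subdominant after propagation through the $-\tfrac{1}{\beta}\log(\cdot)$ map; this is immediate from the uniform bounds on $\langle H\rangle_{\beta'}$ and its derivative listed above.
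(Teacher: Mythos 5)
Your argument is essentially sound, but it takes a substantially longer route than the paper and is motivated by a premise that is false. The paper's proof is a one-liner: writing $\langle H\rangle_{\beta}=F(\beta)+\beta^{-1}S(\rho_{\beta})$ with $F(\beta)=-\beta^{-1}\log Z$, and combining your own sandwich $E_0-\beta^{-1}n\log 2\leq F(\beta)\leq E_0$ with the entropy bound $0\leq S(\rho_{\beta})\leq n\log 2$, one gets $|\langle H\rangle_{\beta}-E_0|\leq 2n/\beta$ with no reference to the spectral gap. So your claim that estimating $E_0$ from $\langle H\rangle_{\beta}$ alone requires $\beta\gtrsim 1/\Delta$ is incorrect (the gap only controls how fast $\langle H\rangle_\beta \to E_0$ \emph{exponentially}; the entropy term already forces convergence at rate $n/\beta$, which is all that is needed for $1/{\rm poly}(n)$ accuracy). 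Consequently the paper simply calls the assumed estimator once at $\beta={\rm poly}(n)$ and inherits its success probability $1-\delta$ directly. Your thermodynamic-integration detour ($\log Z(\beta)=n\log 2-\int_0^{\beta}\langle H\rangle_{\beta'}\,d\beta'$ plus quadrature) is correct as far as it goes, and it would be the natural tool if one genuinely needed $\log Z$ rather than $\langle H\rangle_\beta$; but here it buys nothing and costs something: you must query the estimator at ${\rm poly}(n)$ temperatures and take a union bound, which forces you to amplify the per-query confidence. Median-based amplification only works when the single-run failure probability $\delta$ is below $1/2$, whereas the hypothesis allows any constant $\delta<1$; for $\delta\in[1/2,1)$ your construction breaks while the paper's single-query argument still matches the stated conclusion. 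If you keep your route, you should either add the assumption $\delta<1/2$ or note that the hypothesis can be preprocessed to that regime; better still, drop the integration entirely and use the entropy identity above.
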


\begin{proof}
	The free energy is given by 
	\begin{align*}
	    	F(\beta)&=-\frac{1}{\beta} \log Z \\&=\langle H\rangle_{\beta}-\beta^{-1} S(\rho_{\beta}),
	\end{align*}

	with $S(\rho)=-{\rm Tr} \rho \ln \rho$.	We have $|F(\beta)-E_0| \leq \beta^{-1} n$ for a system of $n$ qubits, see \cite{BG:poly}. Hence $|\langle H \rangle_{\beta}-E_0| \leq 2 \beta^{-1} n\leq 1/\text{\rm poly}(n)$ for $\beta={\rm poly}(n)$. Hence by choosing $\beta$ polynomially large in $n$, the classical algorithm which outputs the estimate $H^{\rm est}_{\beta}$ gives, with probability larger than $1-\delta$, a $1/{\rm poly}(n)$ approximation to $E_0$.
\end{proof}

Another standard tool we use is 

\begin{proposition}[Chernoff-Hoeffding Inequality]
	Let $X_1, \ldots, X_K$ be random i.i.d. variables such that $a \leq X_i \leq b$ and let $X=\frac{1}{K}\sum_{i=1}^K X_i$ and $\mu=\mathbb{E}(X)$. Then
	\begin{align}
	\mathbb{P}((1-\epsilon) \mu \leq X \leq (1+\epsilon) \mu)&\geq 1-2 e^{-\frac{2 K \epsilon^2 \mu^2}{(b-a)^2}}.
	\label{eq:ch}
	\end{align}
	\label{prop:chern}
\end{proposition}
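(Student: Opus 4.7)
The plan is to prove this as the standard two-sided Hoeffding concentration bound, via the Chernoff method applied to the normalized sum $S_K = \sum_{i=1}^K X_i$ so that $X - \mu = (S_K - K\mu)/K$. First I would split the two-sided event by a union bound, writing
\[
\mathbb{P}(|X-\mu| \geq \epsilon \mu) \leq \mathbb{P}(X - \mu \geq \epsilon \mu) + \mathbb{P}(\mu - X \geq \epsilon \mu),
\]
and treat the two tails symmetrically. For the upper tail I would introduce the centered variables $Y_i = X_i - \mu$, which satisfy $a - \mu \leq Y_i \leq b - \mu$ and $\mathbb{E}[Y_i]=0$, and apply Markov's inequality to $e^{\lambda(Y_1+\cdots+Y_K)}$ for an arbitrary $\lambda > 0$:
\[
\mathbb{P}\bigl(\textstyle\sum_i Y_i \geq K\epsilon\mu\bigr) \leq e^{-\lambda K \epsilon \mu}\,\mathbb{E}\!\left[e^{\lambda \sum_i Y_i}\right] = e^{-\lambda K \epsilon \mu} \prod_{i=1}^{K} \mathbb{E}\!\left[e^{\lambda Y_i}\right],
\]
where independence of the $X_i$'s was used to factor the moment generating function.

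The technical core is Hoeffding's lemma, which I would state and prove next: for any random variable $Y$ with $\mathbb{E}[Y]=0$ and $Y \in [\alpha,\beta]$, one has $\mathbb{E}[e^{\lambda Y}] \leq e^{\lambda^2(\beta-\alpha)^2/8}$. The proof proceeds by using convexity of $t \mapsto e^{\lambda t}$ to bound $e^{\lambda Y} \leq \frac{\beta - Y}{\beta - \alpha} e^{\lambda \alpha} + \frac{Y - \alpha}{\beta - \alpha} e^{\lambda \beta}$, taking expectations so that the linear term in $Y$ vanishes, and then showing via a Taylor expansion of the resulting $\log$ (in the variable $u = \lambda(\beta - \alpha)$) that $\log\bigl(\tfrac{\beta}{\beta-\alpha} e^{\lambda \alpha} - \tfrac{\alpha}{\beta-\alpha} e^{\lambda \beta}\bigr) \leq \lambda^2 (\beta - \alpha)^2/8$. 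Since each $Y_i$ here lives in an interval of length $b-a$, this gives $\mathbb{E}[e^{\lambda Y_i}] \leq e^{\lambda^2 (b-a)^2/8}$.

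Substituting back, the upper tail bound becomes
\[
\mathbb{P}\bigl(\textstyle\sum_i Y_i \geq K\epsilon\mu\bigr) \leq \exp\!\left(-\lambda K \epsilon \mu + \tfrac{K \lambda^2 (b-a)^2}{8}\right).
\]
I would then optimize over $\lambda > 0$: the right-hand side is minimized at $\lambda^{\ast} = 4\epsilon\mu/(b-a)^2$, yielding the exponent $-2K\epsilon^2 \mu^2/(b-a)^2$. The lower tail $\mathbb{P}(\mu - X \geq \epsilon \mu)$ is handled identically by applying Markov's inequality to $e^{-\lambda(Y_1+\cdots+Y_K)}$ and invoking Hoeffding's lemma again, giving the same exponential bound. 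Summing the two tails via the union bound produces the factor of $2$ and completes the proof.

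The main obstacle is Hoeffding's lemma itself — the Chernoff exponentiation and union bound are routine, but obtaining the sharp constant $1/8$ in the sub-Gaussian bound (which in turn gives the factor $2$ in the final exponent $2K\epsilon^2\mu^2/(b-a)^2$) requires the convexity step followed by the slightly delicate Taylor-remainder estimate on $\log$; a cruder bound on $\mathbb{E}[e^{\lambda Y}]$ would weaken the stated constant. Everything else is bookkeeping.
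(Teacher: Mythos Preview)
The paper does not prove this proposition; it is simply stated as ``another standard tool'' and used as a black box in the proof of Theorem~\ref{thm:sign}. Your proposal is the standard textbook proof of Hoeffding's inequality via the Chernoff method and Hoeffding's lemma, and it is correct. One minor remark: your optimization step implicitly assumes $\epsilon\mu>0$ so that $\lambda^\ast>0$; when $\mu\le 0$ the stated bound is either vacuous or requires the obvious symmetric treatment, but this is a cosmetic point and does not affect the argument as applied in the paper.
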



{\em Proof of Theorem \ref{thm:sign}}:
Let $H$ be the sparse $n$-qubit Hamiltonian which is globally stoquastic in the standard computational basis $\{\ket{x}\}_{x=0}^{2^n-1}$. Thus, upon being given $\ket{x}$, there are ${\rm poly}(n)$ $y$ such that $\bra{y} H \ket{x}\neq 0$ and these can be efficiently computed. A path integral representation of $\langle H \rangle_{\beta}$ is given by 
\begin{align}
\langle H \rangle_{\beta} &=\sum_{\mathbf{x}}h(\mathbf{x})P(\mathbf{x},\beta)+ \epsilon_{\rm trot}(N,\beta, ||H||)
\label{eq:trot}
\end{align}
with local energy 
\begin{align}
h(\mathbf{x})&=\frac{1}{N-1}\sum_{i=1}^{N}\frac{\bra{x_{i+1}}H(I-\tau H)\ket{x_{i}}}{\bra{x_{i+1}}I-\tau H \ket{x_{i}}}
\label{eq:local-energy}
\end{align}
which depends on the path $\mathbf{x}=(x_1,...x_N)$, with $n$-bit configuration $x_i$. Here
$ \tau=\beta/N$ with $N$ the Trotter step and the probability distribution $P(\mathbf{x}, \beta)$ is given by
\begin{align}
P(\mathbf{x}, \beta)&=\frac{W(\mathbf{x}, \beta)}{
	\sum_{\mathbf{x}} W(\mathbf{x}, \beta)} \label{3}, 
\end{align} 
with 
\begin{align}
W(\mathbf{x}, \beta)=\prod_{i=1}^{N} \bra{x_{i+1}}I-\tau H\ket{x_{i}}\delta_{x_1,x_{N+1}}.
\label{eq:def-weight}
\end{align}
We observe that $P(\mathbf{x}, \beta)$ is a probability distribution when $H$ is globally stoquastic as for all ${\bf x}, \beta$ the weight $W(\mathbf{x},\beta)\geq 0$.
The Trotter step $N$ can be chosen to be $N={\rm poly}(n)$ such that the Trotter error $\epsilon_{\rm trot}(N,\beta ,||H||)$ is $1/{\rm poly}(n)$ small. \footnote{We gloss over the standard analysis of the Trotter error here.}

The goal of the Metropolis algorithm is to set up a Markov chain which allows one to sample from the probability distribution $P({\bf x},\beta)$. The formulation of such Metropolis algorithm for some fixed $\beta, N$ requires that one can efficiently (in $n$) calculate the ratio $\frac{P({\bf x},\beta)}{P({\bf y},\beta)}$ for any two given ${\bf x}$ and ${\bf y}$. This is possible as the weight $W({\bf x,\beta})$ for a path ${\bf x}$ can be efficiently computed using the ability to compute $\bra{x} H \ket{y}$.

Now we assume that it is possible to set up a Metropolis algorithm which allows one to generate samples ${\bf x}_1, \ldots, {\bf x}_K$ from $P(\mathbf{x},\beta)$ efficiently. For each such sample, one computes $h({\bf x})$ and to prove how the sample mean $\frac{1}{K}\sum_{k=1}^K h({\bf x}_k)$
deviates from $\sum_{\mathbf{x}}h(\mathbf{x})P(\mathbf{x},\beta)$, we need to upperbound $|h({\bf x})|$, as it allows us to apply the Chernoff bound in Proposition \ref{prop:chern}. It is important to observe that we need this upper bound to be some ${\rm poly}(n)$ again, so that by choosing large enough $K={\rm poly}(n)$ one can suppress the error $\epsilon \mu$.
We observe that a path ${\bf x}$ has zero probability with respect to $P(\mathbf{x}, \beta)$ when it contains a segment $x_i, x_{i+1}$ with $\bra{x_{i+1}} I-\tau H \ket{x_i}=0$. Assume that we accept a new configuration in our Metropolis algorithm only when this new configuration ${\bf x}$ has the property that $\forall i\colon, \bra{x_{i+1}} I-\tau H \ket{x_i} \neq 0$: this property is easy to verify namely one checks whether $x_{i+1}=x_i$ or, if not, $\bra{x_{i+1}} H \ket{x_i} \neq 0$, only these paths are allowed.  In addition, disallowing these paths does not alter what is the fixed point of the Markov chain, namely $P({\bf x}, \beta)$.

Now we bound $|h({\bf x})|$ in Eq.~(\ref{eq:local-energy}) for all such allowed path configurations, i.e. 
\begin{align}
&\left|\frac{\bra{x_{i}}H(I-\tau H)\ket{x_{i}}}{\bra{x_{i}}I-\tau H \ket{x_{i}}}\right| \leq
\frac{H_{x_i,x_i}- \tau \sum_{y}H_{y,x_i}^2}{1- \tau H_{x_i,x_i}} \leq \nonumber\\
&|H_{x_i,x_i}|(1+2\tau |H_{x_i,x_i}|)  \nonumber\\
&x_i \neq x_{i+1}\colon \left|\frac{\bra{x_{i+1}}H(I-\tau H)\ket{x_i}}{\bra{x_{i+1}}I-\tau H \ket{x_i}}\right| =\left|\frac{1}{\tau}-\frac{H^2_{x_{i+1},x_i}}{H_{x_{i+1},x_i}}\right| \nonumber \nonumber\\ &\leq  
\frac{1}{\tau}+\frac{|H^2_{x_{i+1},x_i}|}{|H_{x_{i+1},x_i}|}.
\label{eq:bounds}
\end{align}
We see that when $||H_{x,y}||=O(1)$ and $\tau={\rm poly}(n)$, this results in the average value $h({\bf x})$ being polynomially bounded in $n$. 

The total error $\varepsilon$ in $H_{\beta}^{\rm est}$ versus $\langle H\rangle_{\beta}$ in Proposition \ref{prop:free} is both due to the $1/{\rm poly}(n)$ Trotter error in Eq.~(\ref{eq:trot}) and the finite-$K$ sampling error. Since $\mu=\sum_{\bf x} P(\mathbf{x},\beta) h({\bf x})\leq {\rm poly}(n)$, $K={\rm poly}(n)$ can chosen in the application of the Chernoff-Hoeffding inequality in Eq.~(\ref{eq:ch}), so that the error $\epsilon \mu=1/{\rm poly}(n)$ and $\delta$ is exponentially small. Hence, collecting such $1/{\rm poly}(n)$ errors would lead to the existence of a probabilistic polynomial time path integral Monte Carlo algorithm which satisfies the assumption in Proposition \ref{prop:free}.\\


\subsection{Sign Problem}
\label{sec:sign}

When applied to general (real) Hamiltonians the path integral Monte Carlo algorithm suffers from the so-called sign problem as for general real Hamiltonians the weights $W(\mathbf{x},\beta)$ in Eq.~(\ref{eq:def-weight}) can also be negative, making $P(\mathbf{x}, \beta)$ a quasi-probability distribution. 
A standard method is then to define a genuine probability distribution 
\begin{equation}
\tilde{P}(\mathbf{x},\beta)=\frac{|W(\mathbf{x}, \beta)|}{\sum_{\mathbf{x}} |W(\mathbf{x}, \beta)|},
\end{equation}
such that 
\begin{equation}
\langle H \rangle_{\beta} \approx \frac{\sum_{\bf x} [h({\bf x}){\rm sign}(P({\bf x},\beta))] \tilde{P}(\mathbf{x},\beta) }{\sum_{\mathbf{x}}{\rm sign}(W(\mathbf{x}, \beta)) \tilde{P}(\mathbf{x},\beta)}.
\label{eq:approxH-sign}
\end{equation}

One can see the probability distribution $\tilde{P}({\bf x},\beta)$ as the fixed point of some Metropolis Monte Carlo algorithm which uses a stoquastified or `designed' version of $H$ \cite{Hangleiter2019, CAHY:design} defined as $\tilde{H}$:
\begin{align}
\Tilde{H}_{xy}&=-|H_{xy}|, \ \ \  \forall x\neq y \nonumber \\
\tilde{H}_{xx}&=H_{xx} .
\label{def:design}
\end{align}
Note that sparsity or locality of $\tilde{H}$ and the efficient evaluation of $\bra{x} \tilde{H} \ket{y}$ hold when they hold for $H$, hence $\tilde{H}$ can be used to determine a Metrolis algorithm. Since
\begin{align*} 
|W({\bf x}, \beta)| & = \prod_{i=1}^N \left \vert \bra{x_{i+1}}I-\tau \tilde{H}\ket{x_{i}}\right \vert \delta_{x_1, x_{N+1}}, 
\end{align*}
the Metropolis Markov chain towards the distribution $\tilde{P}({\bf x},\beta)$ can use $\tilde{H}$.

One can thus similarly apply a Metropolis Markov chain (whose convergence is not guaranteed as previously). However, {\em even if} such efficient and convergent Metropolis algorithm were to exist, it does not imply that one can estimate $\langle H \rangle_{\beta}$ with $1/{\rm poly(n)}$ accuracy as the denominator in Eq.~(\ref{eq:approxH-sign}) can be exponentially small in $n$. In order to estimate $\langle H \rangle_{\beta}$ accurately, one needs to make the relative error of the denominator $\epsilon=1/{\rm poly}(n)$. When the mean is exponentially small, the Chernoff-Hoeffding inequality in Proposition \ref{prop:chern} says that this would take an exponential number of samples $K$, which is the crux of the sign problem.

\subsection{Translationally-invariant 1D XYZ Model}
\label{app:XYZ}

The translationally invariant 1D XYZ model,
\begin{equation}\label{translinvar}
H=\sum_{i}\alpha X_iX_{i+1}+\beta Y_iY_{i+1}+\gamma Z_iZ_{i+1},
\end{equation}
is known to be always sign-problem free, even when the Hamiltonian itself is not stoquastic in the standard basis \cite{Suzuki}. The following argument holds for open and closed boundaries.

Here we focus on evaluating $Z(\beta)={\rm Tr} \exp(-\beta H)=\sum_{\bf x} W({\bf x,\beta})+\epsilon_{\rm trot}$ and imagine using the method in \cite{Crosson2018} to determine the expectation of observables in the Gibbs state. From Eq.(~\ref{3}) we can see $ W({\bf x,\beta})$ as a product of weights on a closed path starting and ending at $x_1$.  One can argue that every such path ${\bf x}$ has non-negative weight $W({\bf x},\beta)$ as follows. Every path is constructed from some $m$ insertions of $-\tau (\alpha X_iX_{i+1}+\beta Y_iY_{i+1})$ for some $i$'s and diagonal factors. For the path to come back to the initial string, it certainly needs to be true that $m$ is even. Since
\begin{align*}
&\alpha X_iX_{i+1}+\beta Y_iY_{i+1} \ket{0_i0_{i+1}}=(\alpha-\beta) \ket{1_i1_{i+1}} \\
&\alpha X_iX_{i+1}+\beta Y_iY_{i+1} \ket{1_i1_{i+1}}=(\alpha-\beta) \ket{0_i0_{i+1}} \\
&\alpha X_iX_{i+1}+\beta Y_iY_{i+1} \ket{1_i0_{i+1}}=(\alpha+\beta) \ket{0_i1_{i+1}} \\
&\alpha X_iX_{i+1}+\beta Y_iY_{i+1} \ket{0_i1_{i+1}}=(\alpha+\beta) \ket{1_i0_{i+1}},
\end{align*}
it also holds that the number $m_1$ of insertions to a $\ket{00}$ or $\ket{11}$ bitstring segment must be even, since the insertion to a bitstring segment $\ket{10}$ or $\ket{01}$ merely moves the position of the bits. So $m_2=m-m_1$, the number of insertions to segments $\ket{10}$ or $\ket{01}$ must also be even. Due to translational invariance, the weight of a path with $m=m_1+m_2$ insertions is 
$(-\tau)^m(\alpha+\beta)^{m_2} (\alpha-\beta)^{m_1}>0$, hence we can replace $W({\bf x}, \beta)$
by $|W({\bf x},\beta)|$ without modifying $Z(\beta)$.
When the chain is not translationally-invariant, this simple argument no longer applies. 

To argue that there are indeed choices for $\alpha, \gamma$ and $\delta$ such that $H$ in Eq.~(\ref{translinvar}) is not stoquastic by single-qubit (product) transformations or even by Clifford transformations despite this representation of $Z(\beta)$, let's take a closed chain with an odd number of edges. For such XYZ models single-qubit unitaries are as powerful as single-qubit Cliffords \cite{Klassen2019}. Single-qubit Cliffords can only permute Paulis and add signs, so we permute the Paulis on each qubit so that $|\alpha|$ is largest and hence $|\alpha| \geq |\beta|$. If $\alpha < 0$, then we are done. If however, $\alpha > 0$, we cannot apply Paulis such that $\alpha < 0$ for all edges. Thus in this case the XYZ model is not stoquastic by single-qubit Clifford transformations (nor by the transformations in Theorem \ref{clifford2}).

\section{Single-Qubit Clifford Transformations Fall Short For Some Cases}
\label{sec:fs}

In \cite{KT:stoq} and \cite{Klassen2019} an efficient algorithm was proposed for removing, if possible, the sign problem under single-qubit product unitaries. Applying this to a simple 1D XYZ model we will show that the sign problem cannot always be cured by single-qubit unitary transformations. 

For the Heisenberg XYZ model on a 1D chain in Eq.~(\ref{eq:heis}), let the matrices $\beta_{i\ i+1}={\rm diag}(\alpha_{i\ i+1}^{XX},\alpha_{i \ i+1}^{YY},\alpha_{i \ i+1}^{ZZ})$. As has been shown in \cite{KT:stoq}, $H$ can be sign-cured if there exist $3 \times 3$ signed permutation matrices $\Pi_i$, with unit determinant (corresponding to single-qubit Cliffords where the signs are due to the Paulis) such that
\begin{equation*}
\Pi_i\beta_{i\ i+1}\Pi_{i+1}=\Tilde{\beta}_{i\ i+1},
\end{equation*}
where $\Tilde{\beta}_{i\ i+1}=\text{diag} (\Tilde{\alpha}_{i \ i+1}^{XX},\Tilde{\alpha}_{i\ i+1}^{YY},\Tilde{\alpha}_{i\ i+1}^{ZZ})$ and $\Tilde{\alpha}_{i\ i+1}^{XX}\leq-|\Tilde{\alpha}_{i\ i+1}^{YY}|$ for all $i$. When $\beta_{i\ i+1}$ and $\beta_{i+1 \ i+2}$ are both of rank larger than or equal to two, the Clifford transformation inducing the permutations needs to be the same on qubits $i,i+1$ and $i+2$ to preserve the diagonal form of the $\beta$-matrices \cite{KT:stoq}. This condition allows us to construct examples of Hamiltonians which cannot be sign-cured using single-qubit unitaries. 

As a simple example consider the following three qubit Hamiltonian 
\begin{align*}
H_{123}=2X_1X_2+Z_1Z_2+X_2X_3+3Y_2Y_3+2Z_2Z_3,
\end{align*}
which corresponds to $\beta_{12}=\text{diag}(2,0,1),\ \beta_{23}=\text{diag}(1,3,2)$. We see that the YY coefficient is the largest in absolute value of the coefficients in $\beta_{23}$ and in particular $|\alpha_{23}^{YY}|>|\alpha_{23}^{XX}|$. We can convince ourselves that there does not exist a signed permutation under which $\Tilde{\alpha}_{23}^{XX}\leq-|\Tilde{\alpha}_{23}^{YY}|$, $\Tilde{\alpha}_{12}^{XX}\leq-|\Tilde{\alpha}_{12}^{YY}|$. Indeed, the same argument holds for every Hamiltonian which contains $\beta_{i \ j}=\text{diag}(a,0,b)$ with $a>b$ and $\beta_{j \ k}=\text{diag}(c,d,e)$ with $c<e<d$ and $a,b,c,d,e \neq 0$. We have thus established that single-qubit unitaries are not always sufficient to sign-cure 1D XYZ Hamiltonians.

\end{document}